\newcommand{\ket}[1]{|#1\rangle}
\newcommand{\braket}[2]{\langle #1|#2\rangle}
\newcommand{\cent}[0]{\mbox{\textcent}}
\newcommand{\dollar}[0]{\$}
\newtheorem{fact}{Fact}
\newcommand{\tinyspace}{\mspace{1mu}}
\newcommand{\abs}[1]{\left\lvert\tinyspace #1 \tinyspace\right\rvert}
\newcommand{\inner}[2]{\ensuremath{\langle{#1}|{#2}\rangle}}
\title{Implications of quantum automata for contextuality}
\author{Jibran Rashid\inst{1}$^,$\thanks{Jibran Rashid was supported by QSIT Director's Reserve Project.} \and Abuzer Yakary{\i}lmaz\inst{2}$^,$\thanks{Abuzer Yakary{\i}lmaz was partially supported by ERC Advanced Grant MQC.}}
\institute{Facolt\`{a} di Informatica, Universit\`{a} della Svizzera Italiana, Via G. Buffi 13, 6900, Lugano, Switzerland
\and
University of Latvia, Faculty of Computing, Raina bulv. 19, R\={\i}ga, LV-1586, Latvia \\
\email{jibran.rashid@usi.ch}, \email{abuzer@lu.lv}
}
\authorrunning{Rashid and Yakary{\i}lmaz} 
\begin{document}

\maketitle

\begin{abstract}
We construct zero-error quantum finite automata (QFAs) for promise problems which cannot be solved by bounded-error probabilistic finite automata (PFAs).  Here is a summary of our results:
\begin{enumerate}
	\item There is a promise problem solvable by an exact two-way QFA in exponential expected time, but not by any bounded-error sublogarithmic space probabilistic Turing machine (PTM).
	\item There is a promise problem solvable by an exact two-way QFA in quadratic expected time, but not by any bounded-error $ o(\log \log n) $-space PTMs in polynomial expected time. The same problem can be solvable by a one-way Las Vegas (or exact two-way) QFA with quantum head in linear (expected) time.
	\item There is a promise problem solvable by a Las Vegas realtime QFA, but not by any bounded-error realtime PFA. The same problem can be solvable by an exact two-way QFA in linear expected time but not by any exact two-way PFA.
	\item There is a family of promise problems such that each promise problem can be solvable by a two-state exact realtime QFAs, but, there is no such bound on the number of states of realtime bounded-error PFAs solving the members this family.
\end{enumerate}
Our results imply that there exist zero-error quantum computational devices with a \emph{single qubit} of memory that cannot be simulated by any finite memory classical computational model. This provides a computational perspective on results regarding ontological theories of quantum mechanics~\cite{Hardy04},~\cite{Montina08}. As a consequence we find that classical automata based simulation models \cite{Kleinmann11},~\cite{Blasiak13} are not sufficiently powerful to simulate quantum contextuality. We~conclude by highlighting the interplay between results from automata models and their application to developing a general framework for quantum contextuality. 
\end{abstract}
%
%

\newcommand{\pal}{\mathtt{PAL}}
\newcommand{\upal}{\mathtt{UPAL}}
\newcommand{\eq}{\mathtt{EQ}}
\newcommand{\twin}{\mathtt{TWIN}}

\newcommand{\promisepal}{\mathtt{PromisePAL}}
\newcommand{\promisepalyes}{\mathtt{PromisePAL_{yes}}}
\newcommand{\promisepalno}{\mathtt{PromisePAL_{no}}}
\newcommand{\evenodd}{\mathtt{EVENODD^k}}
\newcommand{\promisetwinpal}{\mathtt{PromiseTWINPAL}}
\newcommand{\promisetwinpalyes}{\mathtt{PromiseTWINPAL_{yes}}}
\newcommand{\promisetwinpalno}{\mathtt{PromiseTWINPAL_{no}}}

\newcommand{\promiseexptwinpal}{\mathtt{EXPPromiseTWINPAL}}
\newcommand{\promiseexptwinpalyes}{\mathtt{EXPPromiseTWINPAL_{yes}}}
\newcommand{\promiseexptwinpalno}{\mathtt{EXPPromiseTWINPAL_{no}}}

\newcommand{\promiseeq}{\mathtt{PromiseEQ}}
\newcommand{\promiseeqyes}{\mathtt{PromiseEQ_{yes}}}
\newcommand{\promiseeqno}{\mathtt{PromiseEQ_{no}}}

\newcommand{\exactpal}{\mathcal{EXACT_{PAL}}}
\newcommand{\exacteq}{\mathcal{EXACT_{EQ}}}
\newcommand{\exacttwinpal}{\mathcal{EXACT_{TWINPAL}}}
\newcommand{\awpal}{\mathcal{AW_{PAL}}}
\newcommand{\aweq}{\mathcal{AW_{EQ}}}
\newcommand{\awtwinpal}{\mathcal{AW_{TWINPAL}}}	
\newcommand{\lvexppal}{\mathcal{LV_{EXPTWINPAL}}}	
\newcommand{\exactexppal}{\mathcal{EXACT_{EXPTWINPAL}}}	
\pretolerance=10000

\newcommand{\op}[1]{\textsf{#1}}
\newcommand{\expect}[1]{\ensuremath{\langle{#1}\rangle}}




\section{Preliminaries}

Consider Alice and Bob who are presented with a $3 \times 3$ grid as depicted in Figure~\ref{fig:square}. They are asked to determine entries $\op{A}_i \in \{-1,+1\}$ for each cell in the grid such that the parity for each row and column is ``$+1$'' except for the third column which has parity ``$-1$''. Let $\op{R}i$ be the parity for row~$i$ and $\op{C}j$ be the parity for column~$j$. The fact that no such assignment exists for the square can be verified by noting that $\prod_{i=1}^3 \op{R}i=1$ while $\prod_{j=1}^3 \op{C}i=-1$.
\begin{figure}[ht]
\centering
\scalebox{1}{\includegraphics{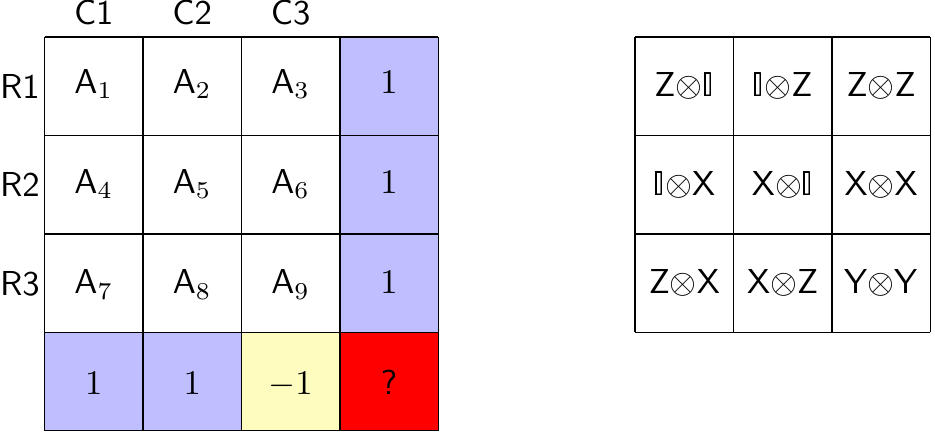}} 
\caption{Peres--Mermin magic square on the left. Each entry in the right square gives the measurement performed by the players to generate the corresponding output bit for the Peres--Mermin square.}
\label{fig:square}
\end{figure}

After determining a common strategy, the players are spatially separated and the game proceeds as follows. Alice receives input $i$ and Bob receives input $j$, each chosen uniformly random from the set~$\{1,2,3\}$. They are required to output cell entries corresponding to row~$i$ and column~$j$ respectively, such that the parity requirement is satisfied and furthermore the common cell in their output is consistent, i.e., both of them assign it the same value. The game is called the Peres-Mermin magic square~\cite{Peres90},~\cite{Merm90a} and is an example of the more general Kochen--Specker theorem~\cite{Kochen67}.

Even though no classical strategy allows the players to win the magic square game with certainty, if the players share a pair of Bell states given by
\begin{equation*}
\ket{\psi} = \frac{1}{2}\left(\ket{00}+\ket{11} \right)^{\otimes 2},  
\end{equation*}
then performing the measurements given in Figure~\ref{fig:square} result in correlations that always satisfy the magic square requirements. This does not correspond to a fixed assignment to the square, just that in each independent run of the game, Alice and Bob are able to generate output that satisfies the requirements imposed on the rows and columns. This behaviour is usually studied under the notion of \emph{contextuality} in quantum mechanics.

Recent work has focused on developing a general framework for contextuality based on generating a hypergraph for a given contextuality scenario and studying its combinatorial properties~\cite{Cabello14},\cite{Acin14}. Even though graph theoretic structures are appropriate for modelling contextuality they lack the computational perspective that comes from modelling the computational procedures that generate contextuality scenarios. Quantum automata provide exactly such a framework. As a direct consequence of such considerations, we find that separations between classical and quantum finite automata imply that no amount of finite memory is in general sufficient to simulate quantum behaviour. Similar results have also been obtained by Hardy~\cite{Hardy04} and Montina~\cite{Montina08}. 

Kleinmann et al.~\cite{Kleinmann11} and Blasiak~\cite{Blasiak13} have suggested a classical simulation of Peres-Mermin magic square using classical memory. Cabello and Joosten~\cite{Cabello11} have shown that the amount of memory required to simulate the measurement results of the generalized Peres-Mermin square increasingly violate the Holevo bound. Cabello~\cite{Cabello12b} proposed the \emph{principle of bounded memory} which states that the memory a finite physical system can keep is bounded. On the other hand, Cabello~\cite{Cabello12} has also shown that the memory required to produce quantum predictions grows at least exponentially with the number of qubits~$n$. 

We~show that a stronger statement follows from our results on the separations between quantum and classical finite automata. More specifically, there exist promise problems that quantum automata equipped with a \emph{single qubit} can solve with zero-error while no classical finite memory model can solve these problems with bounded error. In contrast, the exponential separation obtained by Cabello~\cite{Cabello12} requires a quantum system of size~$n$. The hidden variable model for a single qubit due to Bell~\cite{Bell64} does not apply since there are only finite bits available for the classical simulation.
 
We assume the reader is familiar with the basics of quantum computation~\cite{NC00} and the basic models in automata theory~\cite{Yak11A}. 

\subsection{Quantum Automata}
We denote input alphabet by $ \Sigma $, which does not include $ \cent $ (the left end-marker), $ \dollar $ (the right end-marker) and $ \tilde{\Sigma} = \Sigma \cup \{\cent,\dollar\} $. A promise problem is a pair $ \tt P = (P_{yes},P_{no}) $, where $ \mathtt{P_{yes},P_{no}} \subseteq \Sigma^* $ and $ \mathtt{P_{yes}} \cap \mathtt{P_{no}} = \emptyset $ \cite{Wat09A}. $ \tt P $ is said to be solved by a machine $ \mathcal{M} $ with error bound $ \epsilon \in (0,\frac{1}{2}) $ if any member of $ \tt P_{yes} $  is accepted with a probability at least $ 1 - \epsilon $ and any member of $ \tt P_{no} $ is rejected by $ \mathcal{M} $ with a probability at least $ 1- \epsilon $. $ \tt P $ is said to be solved by $ \mathcal{M} $ with bounded-error if it is solved by $ \mathcal{M} $ with an error bound. If $ \epsilon = 0 $, then it is said that the problem is solved by $ \mathcal{M} $ \textit{exactly}. A special case of bounded-error is one-sided bounded-error where either all members of $ \tt P_{yes} $ are accepted with probability 1 or all members of $ \tt P_{no} $ are rejected with probability 1. $ \mathcal{M} $ is said to be Las Vegas (with a success probability $ p \in (0,1] $) \cite{HS01} if
\begin{itemize}
	\item $ \mathcal{M} $ has the ability of giving three answers (instead of two): ``accept'', ``reject'', or ``don't know'';
	\item for a member of $\tt P_{yes}$, $ \mathcal{M} $ gives the decision of ``acceptance'' with a probability at least $ p $ and gives the decision of ``don't know'' with the remaining probability; and,
	\item for a member of $ \tt P_{no}$, $ \mathcal{M} $ gives the decision of ``rejection'' with a probability at least $ p $ and gives the decision of ``don't know'' with the remaining probability.
\end{itemize} 
If $ \tt P $ satisfies $ \mathtt{P_{yes}} \cup \mathtt{P_{no}} = \Sigma^* $ and it is \textit{solvable} by $ \mathcal{M} $, then it is conventional said that $ \mathtt{P_{yes}} $ is \textit{recognized} by $ \mathcal{M} $.

All models in this paper have a single-head read-only tape on which the given input string is placed between left and right end-markers. The head never leaves the end-markers. The input head can move to the left, move to the right, or stay on the same square. This property is denoted as ``two-way''. If the input head is not allowed to move to left, then it is called ``one-way''. As a further restriction, if the input head is allowed to stay on the same square only for a fixed-number of steps, then it is called ``realtime''. Note that any realtime quantum model has a classical head.

A two-way automaton is called sweeping if the input head is allowed to change its direction only on the end-markers \cite{Sip80,KKM12}. A very restricted version of sweeping automaton called restarting realtime automaton runs a realtime algorithm in an infinite loop,  \cite{YS10B}, i.e. if the computation is not terminated on the right end-marker, the same realtime algorithm is executed again. 

A two-way finite automaton with quantum and classical states (2QCFA) \cite{AW02} is a two-way deterministic finite automaton augmented with a fixed-size quantum register. Formally,\footnote{Here, we define a slightly different model than the original one, but, they can simulate each other exactly.} a 2QCFA is 
\begin{equation*}
	\mathcal{M} = (S,Q,\Sigma,\delta,s_1,q_1,s_a,s_r),
\end{equation*}
where $ S $ and $Q$ are the set of classical and quantum states, respectively; $ s_1 \in S $ and $q_1 \in Q$ are initial classical and quantum states, respectively; $ s_a \in S $ and $ s_r \in S $ ($s_a \neq s_r$) are the accepting and rejecting states, respectively; and $ \delta $ is the transition function composed by two sub-elements $ \delta_q $ and $ \delta_c $ that govern the quantum part and classical part of the machine, respectively. Suppose that $ \mathcal{M} $ is in state $ s \in S $ and the symbol under the input head is $ \sigma \in \tilde{\Sigma} $.  In each step, first the quantum part and then the classical part is processed in the following manner:
\begin{itemize}
	\item $ \delta_q(s,\sigma) $ determines either a unitary operator, say $ \op{U}_{s,\sigma} $, or a projective operator, say $ \op{P}_{s,\sigma} = \{ \op{P}_{s,\sigma,1},\ldots,\op{P}_{s,\sigma,k} \} $ for some $ k>0 $, and then it is applied to the quantum register. Formally, in the former case,
\begin{equation*} \delta_q(s,\sigma,\ket{\psi}) \rightarrow (i=1,U\ket{\psi}), \end{equation*} and, in the latter case, 
{\small
\begin{equation*} 
	\delta_q(s,\sigma,\ket{\psi}) \rightarrow \left\lbrace \left(i, \frac{\ket{\psi_i}}{\sqrt{\braket{\psi_i}{\psi_i}}} \right) \middle| ~ \ket{\psi_i} = \op{P}_{s,\sigma,i} \ket{\psi}, \braket{\psi_i}{\psi_i} \neq 0,  \mbox{ and } 1 \leq i \leq k. \right\rbrace .
\end{equation*}
}
We fix $i = ``1$'' if a unitary operator is applied. Note that only a single outcome ($ i \in \{1,\ldots,k\}$) can be observed in the case of a projective measurement.
	\item After the quantum phase, the machine evolves classically. Formally, 
\begin{equation*}
	\delta(s,\sigma,i) \rightarrow (s',d),
\end{equation*}
where $ i $ is the measurement outcome of quantum phase, $ s' $ is the new classical state, and $ d \in \{ \leftarrow, \downarrow, \rightarrow \} $ represents the update of the input head.
\end{itemize}
Note that, for Las Vegas algorithms, we need to define another halting state called $ s_d $ corresponding to answer ``don't know''.

The computation of $ \mathcal{M} $ on a given input string $ w $ starts in the initial configuration, where the head is on the first symbol of $ \tilde{w} = \cent w \dollar $, the classical state is $ s_1 $, and the quantum state is $ \ket{q_1} $. The computation is terminated and the input is accepted (resp., rejected) if $ \mathcal{M} $ enters to state $ s_a $ (resp., $s_r$). 

A 2QCFA restricted to a realtime head is denoted by rtQCFA. Formally, defined in \cite{ZQLG12}, on each tape square a rtQCFA applies a unitary operator followed by a projective measurement, and then evolves its classical part.\footnote{This definition is sufficient to obtain the most general realtime quantum finite automaton \cite{Hir10,YS11A}. Moreover, allowing more than one quantum or classical transition on the same tape square does not increase the computational power of rtQCFAs.} Another restricted realtime QFA model is the Moore-Cruthcfield quantum finite automaton (MCQFA) \cite{MC00}. It consists of only quantum states and a single unitary operator determined by the scanned symbol is applied on each tape square. A projective measurement is applied at the end of computation. A probabilistic or quantum automaton is called rational or algebraic if all the transitions are restricted to rational or algebraic numbers.

\section{Quantum automata for promise problems}

In this section, we present some promise problems solvable by QFAs \textit{without error} but not solvable by their \textit{bounded-error} probabilistic counterparts. At the end, we will also show that the family of promise problem, which was shown to be solvable by a family of exact realtime QFAs (MCQFAs) having only two states \cite{AY12}, cannot be solvable by a family of bounded-error probabilistic finite automata (PFAs) having a fixed number of states. 

\subsection{Exact rational (sweeping) 2QCFA algorithm}

2QCFAs can recognize $ \pal = \{ w \mid w \in \{a,b\}^* \mbox{ and } w = w^r \} $ 
for any one-sided error bound \cite{AW02,YS10B}. In the case of one-sided error, one decision is always reliable. We use this fact to develop quantum automata for solving promise problems inherited from $ \pal $. 
We~know that $ \pal $ cannot be recognized by bounded-error PTMs using sublogarithmic space \cite{DS92,FK94}.
We take into consideration these facts when formulating our promise problems so that the impossibility results for bounded-error probabilistic algorithms are still applicable for our constructions.

Our first promise problem is 
$
	\promisepal = (\promisepalyes,\promisepalno), 
$
where 
\begin{itemize}
	\item $ \promisepalyes = \{ ucv | u,v \in \{a,b\}^*, |u|=|v|, u \in \pal, \mbox{ and } v \notin \pal \} $ and
	\item $ \promisepalno ~ = \{ ucv | u,v \in \{a,b\}^*, |u|=|v|, u \notin \pal, \mbox{ and } v \in \pal \} $.
\end{itemize}
Each of the two 2QCFA algorithms given for $\pal$ in~\cite{AW02} and~\cite{YS10B} have zero-error when they reject. That is, for a given $ \epsilon \in (0,\frac{1}{2}) $, there exists a 2QCFA $\mathcal{M_{\epsilon}}$ which always accepts every string $w \in \pal$ and every $w \notin \pal$ is accepted with probability at most $ \epsilon $ and it is rejected with probability at least $ 1-\epsilon $. So, if $ \mathcal{M_{\epsilon}} $ rejects an input, we can be certain that the input is a non-member. 

We can design an exact 2QCFA, say $ \exactpal $, for $\promisepal$ based on $ \mathcal{M_{\epsilon}} $ as follows: Let $ w = ucv \in \promisepal $ be the input such that $u,v \in \{a,b\}^*$ and $|u|=|v|$. On input string $w$, $\exactpal $ proceeds in an infinite loop as follows,
\begin{itemize}
	\item the computation splits into two branches on the left end-marker with probabilities  $ \frac{16}{25} $ and $ \frac{9}{25} $, respectively, by applying a rational unitary operator to a qubit followed by a measurement in the computational basis;
	\item in the $1^{st}$ branch, $ \exactpal $ executes $ \mathcal{M_{\epsilon}} $ on $ v $ and \textit{accepts} $ w $ if $ \exactpal $ \textit{rejects} $ v $;
	\item in the $ 2^{nd} $ branch, $ \exactpal $ executes $ \mathcal{M_{\epsilon}} $ on $ u $ and \textit{rejects} $ w $ if $ \exactpal $ \textit{rejects} $ u $; and,
	\item the computation continues, otherwise.
\end{itemize}
Note that only a single decision is given in each branch: In the $ 1^{st} $ branch, the members of $ \promisepalyes $ are accepted with a probability at least $ 1 - \epsilon $ and no decision is given on the members of $ \promisepalno $. In the $ 2^{nd} $ branch, no decision is given on the members of $ \promisepalyes $ and the members of $ \promisepalno $ are rejected with a probability at least $ 1-\epsilon $. Thus, in a single round, the members of $ \promisepalyes $ are accepted with a probability at least $ \frac{16}{25} (1-\epsilon) $ and the members of $ \promisepalno $ are rejected with a probability at least $ \frac{9}{25} (1-\epsilon) $. Thus, $\exactpal$ separates $ \promisepalyes $ and $ \promisepalno $ exactly by calling $ \mathcal{M_{\epsilon}} $ in expected linear time. This establishes Theorem~\ref{thm:promisepal} while the fact that sublogarithmic space PTMs cannot solve~$\promisepal$ is established in Theorem~\ref{thm:nosubPTMforPAL}, proof of which can be found in Appendix~\ref{app:nosubPTMforPAL}.

\begin{theorem}\label{thm:promisepal}
	$\promisepal$ can be solvable by an exact rational sweeping 2QCFA in exponential expected time.
\end{theorem}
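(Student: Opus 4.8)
The plan is to take the machine $\exactpal$ sketched above as the witness and to carry out, in order, four verifications: that its building block can be chosen rational and sweeping, that $\exactpal$ itself reverses its head only on the true end-markers, that it makes no error, and that its expected running time is exponential. First I would fix the imported ingredient precisely: for each $\epsilon\in(0,\tfrac{1}{2})$ take $\mathcal{M_{\epsilon}}$ to be the one-sided-error 2QCFA for $\pal$ of \cite{YS10B}, which is a \emph{rational} restarting realtime automaton — in particular sweeping — that always accepts every $w\in\pal$, rejects every $w\notin\pal$ with probability at least $1-\epsilon$, and halts with probability $1$ in expected time $2^{O(|w|)}$. The structural feature I will lean on is that its \emph{reject} verdict is never wrong.

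Next I would spell out $\exactpal$ on input $w=ucv$ with $|u|=|v|$ exactly as described: one outer round applies at $\cent$ the rational one-qubit rotation with $\cos\theta=\tfrac{4}{5}$, $\sin\theta=\tfrac{3}{5}$ and measures in the computational basis, producing branch $1$ with probability $\tfrac{16}{25}$ and branch $2$ with probability $\tfrac{9}{25}$; in branch $1$ it runs $\mathcal{M_{\epsilon}}$ on the factor $v$, with $c$ and $\dollar$ playing the role of that run's end-markers, and \emph{accepts} $w$ if the run rejects $v$; in branch $2$ it runs $\mathcal{M_{\epsilon}}$ on $u$, with $\cent$ and $c$ as end-markers, and \emph{rejects} $w$ if the run rejects $u$; in every other case the outer round is repeated from $\cent$. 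All amplitudes are rational, so $\exactpal$ is rational. For the sweeping property — the one place needing a little care — whenever the simulated $\mathcal{M_{\epsilon}}$ wants to restart, or an outer round ends without a verdict, I do not turn the physical head at $c$; instead I let it continue to $\dollar$, reverse there, sweep left to $\cent$, reverse there, and then, reading nothing of interest over $uc$, resume the realtime run (or the next outer round) from the appropriate virtual marker. This costs only $O(n)$ extra steps per reversal and ensures the head changes direction solely on $\cent$ and $\dollar$.

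For exactness I would argue the two cases symmetrically. If $w\in\promisepalyes$ then $u\in\pal$ and $v\notin\pal$: in branch $2$ the run of $\mathcal{M_{\epsilon}}$ on $u\in\pal$ never rejects, so $\exactpal$ never rejects $w$; in branch $1$ the run on $v\notin\pal$ rejects $v$ — hence makes $\exactpal$ accept — with probability at least $1-\epsilon$, so each outer round accepts $w$ with probability at least $\tfrac{16}{25}(1-\epsilon)>0$, and iterating the loop accepts $w$ with probability $1$. The case $w\in\promisepalno$ is the mirror image, with the roles of the two branches exchanged, yielding rejection with probability $1$ and no spurious acceptance. Hence $\exactpal$ separates $\promisepalyes$ from $\promisepalno$ with zero error.

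Finally, for the running time I would observe that each outer round reaches a verdict with probability at least $p:=\tfrac{9}{25}(1-\epsilon)$, a positive constant, so the number of rounds is dominated by a geometric variable of parameter $p$ and has expectation $O(1)$; since each round costs $O(n)$ head moves together with one run of $\mathcal{M_{\epsilon}}$ on a factor of length at most $n$, i.e.\ expected time $2^{O(n)}$, the total expected time is $O(1)\cdot 2^{O(n)}=2^{O(n)}$, which is the claimed exponential bound. I expect the only genuinely fiddly steps to be reconciling the simulation of $\mathcal{M_{\epsilon}}$ on a proper factor of $w$ with the strict ``turn only on end-markers'' requirement, and checking that the imported automaton $\mathcal{M_{\epsilon}}$ is simultaneously one-sided, rational, and sweeping — which is exactly what the restarting realtime construction provides.
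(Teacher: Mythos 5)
Your construction and its analysis --- the $\frac{16}{25}$ vs.\ $\frac{9}{25}$ branching at the left end-marker, zero error obtained from the one-sidedness of the subroutine's reject verdict, and a geometrically distributed number of outer rounds each costing one exponential-expected-time run of the subroutine --- is exactly the argument the paper gives. The one genuine problem is the instantiation of the imported ingredient, which you yourself flag as the step needing verification and then get wrong: you take $\mathcal{M_{\epsilon}}$ to be the restarting realtime 2QCFA for $\pal$ of \cite{YS10B} and assert that it is \emph{rational}. It is not: that algorithm uses algebraic (irrational) amplitudes, and the paper explicitly notes that it does not know how to carry out this construction with rational transitions starting from \cite{YS10B} (short of using superoperators). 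With that choice your witness automaton is algebraic rather than rational, so it does not establish the theorem as stated. The building block that is simultaneously one-sided, \emph{rational}, and \emph{sweeping} is the Ambainis--Watrous 2QCFA for $\pal$ of \cite{AW02}, and substituting it repairs the proof with no other changes. Once that swap is made, the head-management detail you spell out --- deferring every direction change to the true end-markers $\cent$ and $\dollar$ while simulating $\mathcal{M_{\epsilon}}$ on the factor $u$ or $v$ --- is a correct and somewhat more explicit account of why the composed machine inherits the sweeping property, which the paper merely asserts; the exactness argument and the $O(1)\cdot 2^{O(n)}$ expected-time bound are sound as written.
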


\begin{theorem}
	\label{thm:nosubPTMforPAL}
	Bounded-error sublogarithmic space PTMs cannot solve $ \promisepal $.
\end{theorem}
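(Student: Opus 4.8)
The plan is to reduce the statement to the known impossibility of recognizing $\pal$ with bounded error in sublogarithmic space~\cite{DS92,FK94}. Assume, toward a contradiction, that some bounded-error PTM $M$ with space bound $s(n)=o(\log n)$ solves $\promisepal$ with error $\epsilon<1/2$. The goal is to extract from $M$ a bounded-error PTM $M'$ using space $s(n)+O(1)$ that recognizes $\pal$ (equivalently $\overline{\pal}$), which contradicts the cited lower bound.

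The first thing I would try is a genuine many-one reduction $w\mapsto u\,c\,v$ with $|u|=|v|$, $u\in\pal$ iff $w\in\pal$, and $v\in\pal$ iff $w\notin\pal$: then $M'$ simply runs $M$ on $u\,c\,v$, tracking $M$'s virtual input head by its own input head together with a constant-size flag recording which side of the separator $c$ it is on, so the extra space is $O(1)$; and since $|ucv|=\Theta(n)$, $M$ still uses $o(\log n)$ space measured in $|w|$. The obstruction --- which I expect to be the main difficulty --- is the clause ``$v\in\pal$ iff $w\notin\pal$'': any string computed from $w$ by a transformation cheap enough to be simulated in $O(1)$ extra space is a palindrome exactly when a fixed \emph{conjunction} of pairwise-equality conditions on the symbols of $w$ holds, whereas ``$w\notin\pal$'' is a \emph{disjunction} (some symmetric pair of $w$ mismatches). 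Hence no such local reduction keeps every instance inside the promise, and a naive reduction does not go through.

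To get around this I would not reduce from all of $\pal$ but restrict $M$ to a subfamily $F_n$ of $\promisepal$-instances on which $M$ is still forced to certify palindromicity of an $n$-letter block --- for example inputs $u\,c\,v$ with $|u|=|v|=n$ in which the padding on the \emph{non}-palindromic side is itself varied over a rich set (rather than a fixed, finite-state-recognizable string), so that no $O(1)$-memory shortcut can separate the yes- from the no-instances --- and then re-run the counting argument behind~\cite{DS92,FK94}. Cutting such an input at the midpoint of the candidate palindrome block, one uses that a bounded-error machine with $o(\log n)$ space has only $n^{o(1)}$ essentially-distinct crossing behaviours at the cut, so two blocks that agree on the already-scanned side are accepted with nearly equal probability; choosing one such pair whose completions are a yes- and a no-instance of $\promisepal$ contradicts the constant acceptance gap that $M$ must maintain. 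Since $M'$ needs only $O(1)$ worktape beyond $M$, this yields the desired contradiction with the $\pal$ lower bound. The technical heart is isolating $F_n$ so that the Dwork--Stockmeyer/Freivalds--Karpinski machinery transfers verbatim; the crossing-sequence estimates themselves are standard.
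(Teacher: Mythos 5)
Your diagnosis of why a promise-preserving many-one reduction fails is exactly right, and your pivot to running the Dwork--Stockmeyer dissimilarity/counting machinery directly on $\promisepal$ is indeed the paper's approach. However, the step you defer as ``the technical heart'' --- isolating the family on which that machinery transfers --- is precisely where your sketch breaks, and the one concrete choice you make (cutting at the midpoint of the candidate palindrome block) fails for the very reason you yourself gave against the naive reduction. If the pairwise-dissimilar strings are prefixes $p$ ending at such a single cut, then a separating completion $s$ must satisfy $ps=u\,c\,v\in\promisepalyes$ and $p's=u'\,c\,v\in\promisepalno$ for the \emph{same} suffix $s$, which forces the second block $v$ to be simultaneously a non-palindrome and a palindrome. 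So no single-cut/prefix family can be pairwise dissimilar with respect to $\promisepal$, and ``choosing a pair whose completions are a yes- and a no-instance'' is exactly the step that cannot be carried out as you describe it.

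The paper's fix is to exploit the full two-sided form of the dissimilarity criterion (Theorem~3.3 of~\cite{DS92}): the dissimilar strings are the $2^m$ \emph{infixes} $x_i=u_i\,c\,u_i$ for $u_i\in\{a,b\}^m$, which straddle the separator, and for a pair $x_i\neq x_j$ the contexts are chosen \emph{depending on the pair}, namely $y=u_i^r$ and $z=u_j^r$. Then $yx_iz=(u_i^ru_i)\,c\,(u_iu_j^r)\in\promisepalyes$ while $yx_jz=(u_i^ru_j)\,c\,(u_ju_j^r)\in\promisepalno$, so both completions stay inside the promise and land on opposite sides; since $|W_m|=2^m$ is super-polynomial, the 2PFA bound applies, and the Freivalds--Karpinski variant lifts it to sublogarithmic-space PTMs. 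Your write-up is missing this construction, and without it the argument does not go through. (A smaller point: your closing sentence reverts to the reduction framing via $M'$, which the direct counting argument renders unnecessary.)
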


The scheme given above can be easily generalized to many other cases. The size of the quantum register, the type of the head, and the type of the transitions are determined by $ \mathcal{M_\epsilon} $. Specifically, (i) if $ \mathcal{M_\epsilon} $ is restarting (sweeping), then $ \exactpal $ is restarting (sweeping), too, or (ii) if $ \mathcal{M_\epsilon} $ has only rational (algebraic) amplitudes, then  $ \exactpal $ has rational (algebraic) amplitudes. 

The 2QCFA algorithm for $\pal$ given by Ambainis and Watrous \cite{AW02} is rational and sweeping. The one given by Yakary{\i}lmaz and Say in \cite{YS10B} is restarting but uses algebraic numbers. Both of them run in expected exponential time. In the next subsection we present a new promise problem and we use the former algorithm to obtain an exact rational restarting rtQCFA. Currently, we do not know how to obtain a similar result based on the latter model except by utilizing superoperators. 

\subsection{Exact rational restarting rtQCFA algorithm}

Now, we define a promise problem (a modified version of $\promisepal$) solvable by an exact rational restarting rtQCFA but not by any sublogarithmic space PTMs: 
\normalsize{
$
	\promisetwinpal = (\promisetwinpalyes,\promisetwinpalno),
$}
\normalsize{where}
\begin{itemize}
	\item $ \promisetwinpalyes = \{ u c u c v c v | u,v \in \{a,b\}^+, |u|=|v|, u \in \pal, \mbox{ and } v \notin \pal \} $, and
	\item $ \promisetwinpalno = \{ u c u c v c v | u,v \in \{a,b\}^+, |u|=|v|, u \notin \pal, \mbox{ and } v \in \pal \} $.
\end{itemize}

\normalsize
\begin{theorem}
\label{thm:TWIN}
	There is an exact rational restart rtQCFA that solves $\promisetwinpal$ in exponential expected time. (See Appendix~\ref{app:TWIN} for the proof)
\end{theorem}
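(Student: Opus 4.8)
The plan is to adapt the construction of $\exactpal$ from the previous subsection, replacing its two-way subroutine $\mathcal{M}_\epsilon$ with a \emph{realtime} subroutine that exploits the repeated blocks in the input. Because of the promise, an input $w=ucucvcv$ lies in $\promisetwinpalyes$ exactly when $u\in\pal$ (and then automatically $v\notin\pal$) and in $\promisetwinpalno$ exactly when $v\in\pal$. Hence it suffices to build a procedure that, with positive probability, emits a \emph{zero-error certificate} of ``$v\notin\pal$'' (on which we accept $w$) and another that emits a zero-error certificate of ``$u\notin\pal$'' (on which we reject $w$), restarting whenever neither fires; exactly as for $\exactpal$, the machine tosses a rational coin at the left end-marker to pick which procedure to run in the current round, and the final answer is forced by the promise.

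The core is a realtime test for non-membership in $\pal$ applied to a block $xcx$ (with $x=v$ in the accepting branch and $x=u$ in the rejecting branch). Following the $\pal$ algorithm of \cite{AW02}, fix two rational single-qubit unitaries $R_a,R_b$ that generate a free group in which no nontrivial element is diagonal, and let $\mu(y_1\cdots y_m)=R_{y_1}\cdots R_{y_m}$ be the induced homomorphism. While scanning the first copy of $x$ left to right the automaton applies $R_{x_i}$ at the $i$-th cell, so on reaching the delimiter its qubit holds $\mu(x^r)\ket{q_1}$; while scanning the second copy it applies $R_{x_i}^{-1}$, so at the right end of the block the qubit holds $\mu(x)^{-1}\mu(x^r)\ket{q_1}$, which it then measures in the computational basis. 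If $x\in\pal$ then $\mu(x^r)=\mu(x)$ and this vector equals $\ket{q_1}$ \emph{exactly}, so the outcome is deterministic; if $x\notin\pal$ then $\mu(x)^{-1}\mu(x^r)$ is a nontrivial (hence non-diagonal) element of the free group, so the non-$q_1$ outcome --- a reliable certificate of $x\notin\pal$ --- occurs with some probability $p_x>0$. On that outcome the machine halts (accept or reject, according to the branch); otherwise it restarts. Reading one cell and applying one unitary per step, with a single measurement at the right end-marker, and restarting there, is precisely a restarting rtQCFA, and every amplitude used is rational.

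Correctness and exactness then follow as for $\exactpal$: on a yes-instance the rejecting branch runs on $u\in\pal$ and can never emit a certificate, while the accepting branch emits its certificate with probability $p_v>0$ per round, so the machine halts with probability $1$ and always accepts; the no-instance case is symmetric. Each round scans the length-$\Theta(n)$ input once and the number of rounds is geometric with parameter $\min(p_u,p_v)$; since the rotation realised by a nontrivial length-$2m$ word over a free group of rational rotations can be inverse-exponentially small, $p_x$ can be $2^{-\Theta(n)}$, which gives the claimed exponential expected time. The step I expect to need the most care is the choice of $R_a,R_b$: besides $\mu$ being injective (so that palindromes are the only words mapping to the identity) one must ensure that \emph{no} nontrivial word acts diagonally on $\ket{q_1}$ --- otherwise some non-palindrome would be invisible to the computational-basis measurement and the machine would loop forever instead of halting --- together with the inverse-exponential lower bound on $p_x$ needed for the time bound; once a suitable pair is fixed (the one behind the $\pal$ algorithm of \cite{AW02} serves), the remainder is the bookkeeping already done for $\exactpal$.
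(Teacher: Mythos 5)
Your proposal is correct and follows essentially the same route as the paper: branch with a rational coin at the left end-marker, run the Ambainis--Watrous unitary/inverse-unitary test in realtime across the doubled block $xcx$, halt on the zero-error non-$\ket{q_1}$ certificate (accept in the $v$-branch, reject in the $u$-branch), restart otherwise, with the $25^{-|x|}$ lower bound on the certificate probability giving the exponential expected time. The only discrepancy is that the operators of \cite{AW02} that you defer to are $3\times 3$ rational orthogonal matrices rather than single-qubit unitaries, so the register is a qutrit, not a qubit --- immaterial here, since the theorem does not require a single qubit.
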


\begin{theorem}
	\label{thm:nosubPTMforTWIN}
	$\promisetwinpal$ cannot be solved by any bounded-error $ o(\log n) $-space PTM. (See Appendix~\ref{app:nosubPTMforTWIN} for the proof)
\end{theorem}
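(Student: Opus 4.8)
To show that $\promisetwinpal$ admits no bounded-error $o(\log n)$-space PTM, the plan is to reduce $\promisepal$ — already proved hard for such machines in Theorem~\ref{thm:nosubPTMforPAL} — to $\promisetwinpal$ by an ``on the fly'' reduction that blows up the input only by a constant factor and uses only $O(1)$ extra work space. First I would record the trivial combinatorial fact that the map $ucv \mapsto ucucvcv$ carries $\promisepalyes$ into $\promisetwinpalyes$ and $\promisepalno$ into $\promisetwinpalno$: the constraints $|u|=|v|$ and ``exactly one of $u,v$ lies in $\pal$'' are literally the same in the two problems, and every valid $\promisepal$ instance already has nonempty blocks (if $|u|=|v|=0$ then $u=v=\varepsilon\in\pal$, contradicting both the yes- and the no-condition), so the image is a legal $\promisetwinpal$ instance. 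Hence from a supposed bounded-error $o(\log n)$-space PTM $\mathcal{M}$ for $\promisetwinpal$ one obtains a PTM for $\promisepal$ with the same error bound by feeding $\mathcal{M}$ the string $ucucvcv$ on input $ucv$.

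The substantive point is that this composition stays within $o(\log n)$ space, because $ucucvcv$ need never be materialised. A machine whose read-only tape holds ${\cent} u c v \dollar$ can simulate a read-only head walking over the virtual tape ${\cent} u c u c v c v \dollar$ while storing in $O(1)$ cells only a ``phase'' naming which of the constantly many regions (the two endmarkers, the three $c$-separators, the two copies of $u$, the two copies of $v$) the virtual head currently scans. Inside a $u$- or $v$-region the real head simply mirrors the virtual head's offset within the unique real copy of $u$, resp. $v$, so the scanned virtual symbol is read directly off the real tape; inside a $c$-region the virtual symbol is known to be $c$, needing no tape access. When the virtual head crosses a region boundary the simulator repositions the real head using the endmarkers and the real $c$ as landmarks — e.g.\ to pass from the first $c$ into the second copy of $u$ it drives the real head left to $\cent$ and then one step right — which costs $O(n)$ time but still only $O(1)$ space; since no time bound is imposed, this is harmless. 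As $|ucucvcv|$ is within a constant factor of $|ucv|$, $\mathcal{M}$'s $o(\log\cdot)$ work space becomes $o(\log n)$ work space for the simulator, the error probability is preserved exactly, and the result is a bounded-error $o(\log n)$-space PTM for $\promisepal$, contradicting Theorem~\ref{thm:nosubPTMforPAL}.

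The only delicate part is the bookkeeping for the region-boundary crossings: one must check that every crossing — in both head directions, and including the degenerate situations where the virtual head sits on an endmarker, or is about to leave a $c$ whose neighbour is an endmarker — is handled by a fixed finite table of moves, and that the phase register together with this table really fits in constant space. I expect this finite case analysis over the (constantly many) phases to be the main, and essentially the only, obstacle; everything else — preservation of the promise, of the error bound, and of the asymptotic space bound under a constant-factor length increase — is immediate.
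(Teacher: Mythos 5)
Your proposal is correct and follows essentially the same route as the paper's own proof: reduce $\promisepal$ to $\promisetwinpal$ by simulating the hypothetical machine on the virtual input $ucucvcv$ while the real tape holds $ucv$, tracking the virtual head's region in $O(1)$ internal state, so that a bounded-error $o(\log n)$-space PTM for $\promisetwinpal$ would contradict Theorem~\ref{thm:nosubPTMforPAL}. Your write-up is in fact somewhat more careful than the paper's (which dismisses the head-repositioning bookkeeping as ``straightforward''), and your observation that empty $u,v$ cannot occur in a valid $\promisepal$ instance is a detail the paper leaves implicit.
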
  

\subsection{Las Vegas rational rtQCFA algorithm}

Here, we present another promise problem solvable by Las Vegas rtQCFAs or linear-time exact 2QCFAs but not by any bounded-error realtime PFA (rtPFA). Since any exact 2PFA can be simulated by a realtime deterministic finite automaton (rtDFA) (Appendix \ref{app:exact2PFAs}), exact two-way PFAs (2PFAs) also cannot solve this new promise problem. 
The new promise problem is given by: 
$
	\promiseexptwinpal = (\promiseexptwinpalyes,\promiseexptwinpalno),
$
where 
\normalsize{
\begin{itemize}
	\item $ \promiseexptwinpalyes = \{ (u c u c v c v c)^{t} | u,v \in \{a,b\}^+, |u|=|v|, u \in \pal, v \notin \pal, \mbox{ and } t \geq 25^{|u|}  \} $, and
	\item $ \promiseexptwinpalno = \{ (u c u c v c v c)^{t} | u,v \in \{a,b\}^+, |u|=|v|, u \notin \pal, v \in \pal, \mbox{ and } t \geq 25^{|u|} \} $.
\end{itemize}
}

\normalsize
\begin{theorem}
	\label{thm:promiseexptwinpal}
	$ \promiseexptwinpal $ can be solved by a Las Vegas rational rtQCFA or by an exact rational restarting rtQCFA in linear expected time. (See Appendix \ref{app:promiseexptwinpal} for the proof.)
\end{theorem}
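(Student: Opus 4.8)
Proof proposal. The plan is to view a $\promiseexptwinpal$ instance $ w = (ucucvcvc)^{t} $ as $ t $ independent copies of a $\promisetwinpal$ instance and to run, once per copy, the realtime pass of the exact restarting rtQCFA supplied by Theorem~\ref{thm:TWIN}; the promise $ t \geq 25^{|u|} $ is precisely what converts that machine's exponentially small per-pass success probability into a constant. The point of departure is that each block $ ucucvcvc $ contains the prefix $ ucucvcv $, which is a genuine $\promisetwinpal$ instance lying in $\promisetwinpalyes$ exactly when $ w \in \promiseexptwinpalyes $ (and dually for ``no''). Let $ \mathcal{N} $ be the exact restarting rtQCFA of Theorem~\ref{thm:TWIN} and let $ \mathcal{A} $ be the realtime algorithm it iterates. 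On a length-$n$ instance ($ n=|u|=|v| $), one pass of $ \mathcal{A} $ either halts --- necessarily with the correct verdict, since $ \mathcal{N} $ is exact --- or it ``declines'' at the right end-marker, after which the restarting semantics re-initialises the quantum register to $ \ket{q_1} $; moreover, from the analysis behind Theorem~\ref{thm:TWIN}, a pass halts with probability at least $ \alpha\cdot 25^{-n} $ for a fixed constant $ \alpha>0 $.

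The rtQCFA $ \mathcal{B} $ for $\promiseexptwinpal$ then works in realtime as follows: using its classical control to count $ c $'s, it isolates inside each successive block the sub-word $ ucucvcv $ and simulates one pass of $ \mathcal{A} $ on it, treating the delimiting $ c $'s as $ \mathcal{A} $'s end-markers; if that pass halts, $ \mathcal{B} $ halts with the same verdict, and otherwise it performs $ \mathcal{A} $'s re-initialisation, skips the trailing $ c $, and proceeds to the next block. Since the register is reset to $ \ket{q_1} $ whenever a block-pass declines, the $ t $ block-passes are mutually independent, so one full left-to-right sweep through all $ t\geq 25^{n} $ blocks produces a definite verdict with probability at least $ 1-\left(1-\alpha\,25^{-n}\right)^{25^{n}}\geq 1-e^{-\alpha} $, a positive constant, and never produces a wrong one: in the ``yes'' case $ v\notin\pal $ makes any detecting block-pass accept while $ u\in\pal $ forbids a rejection, and dually in the ``no'' case. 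This yields the two promised machines. For the Las Vegas rtQCFA, let $ \mathcal{B} $ enter the ``don't know'' state $ s_d $ if no block-pass has halted by the right end-marker; it succeeds with the constant probability $ 1-e^{-\alpha} $ and, being realtime, runs in linear time. For the exact restarting rtQCFA, replace $ s_d $ by a restart of $ \mathcal{B} $; the number of restarts is geometrically distributed with constant success probability, hence $ O(1) $ in expectation, and since each sweep costs $ O(|w|) $ steps the expected running time is linear. Both machines use only the rational amplitudes of $ \mathcal{N} $ together with classical bookkeeping, so they are rational, as required.

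The main obstacle is the quantitative estimate that a single pass halts with probability at least $ \alpha\cdot 25^{-n} $, i.e.\ with base exactly $ 25 $: one must verify that the rational-amplitude implementation of $ \mathcal{A} $ --- built, as in the $\exactpal$ construction, from the $ \tfrac{16}{25} $/$ \tfrac{9}{25} $ branching and the rational sweeping palindrome test of \cite{AW02} --- genuinely gives a lower bound with this base rather than, say, $ 625^{-n} $, since a weaker estimate would force the promise $ t\geq 625^{|u|} $ instead, so the statement is tight to this bound. A minor additional point is that a pass of $ \mathcal{A} $ ends with a measurement, so that its ``decline'' branch leaves the qubit in a known computational-basis state and the re-initialisation between blocks is a legitimate unitary operation; this is what makes the independence across blocks that the amplification step relies on genuine.
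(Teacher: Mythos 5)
Your proposal is correct and follows essentially the same route as the paper's Appendix~\ref{app:promiseexptwinpal}: run the per-block detection step of the Theorem~\ref{thm:TWIN} machine across the $t \geq 25^{|u|}$ blocks so that the geometric series turns the $25^{-n}$ one-pass probability into a constant success probability at least $1-\frac{1}{e}$ (times the branching weight), answer ``don't know'' otherwise for the Las Vegas machine, and restart on ``don't know'' to get the linear expected-time exact machine. The only cosmetic difference is that the paper flips the $\frac{16}{25}$-versus-$\frac{9}{25}$ branch coin once at the start of the sweep rather than once per block, and the base-$25$ bound you flag as the main obstacle is exactly the $25^{-|w|}$ detection probability of $\awpal$ already established in Appendix~\ref{app:TWIN}.
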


\begin{theorem}
	\label{thm:nortPFA}
	There is no bounded-error rtPFA that solves $ \promiseexptwinpal $. (See Appendix~\ref{app:nortPFA} for the proof.)
\end{theorem}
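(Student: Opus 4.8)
The plan is to show that a bounded-error rtPFA cannot distinguish $\promiseexptwinpalyes$ from $\promiseexptwinpalno$ by exploiting the fact that a realtime PFA processes its input in a single left-to-right pass with no memory of how many blocks it has seen. The key structural observation is that a string in $\promiseexptwinpal$ has the form $(ucucvcvc)^t$ with $t \geq 25^{|u|}$; since the repetition count $t$ may be taken arbitrarily large relative to $|u|$ and $|v|$, an rtPFA --- whose state distribution after reading a fixed finite block $ucucvcvc$ is obtained by applying a fixed stochastic matrix $M_{u,v}$ --- will, over many repetitions, converge toward (a component supported on) the stationary behaviour of $M_{u,v}$, and this limiting behaviour cannot depend on the ordering information that separates the yes-case ($u\in\pal$, $v\notin\pal$) from the no-case ($u\notin\pal$, $v\in\pal$).

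First I would set up the block-transfer formalism: fix an rtPFA $\mathcal{P}$ with state set $R$, and for strings $u,v\in\{a,b\}^+$ with $|u|=|v|$ let $M_{u,v}$ be the $|R|\times|R|$ stochastic matrix induced by reading one block $ucucvcvc$. Writing $\pi_0$ for the initial distribution and $f$ for the vector of final accepting weights (incorporating the right end-marker), the acceptance probability of $(ucucvcvc)^t$ is $\pi_0^{T} M_{u,v}^{\,t} f$. Next I would invoke the standard fact that for any stochastic matrix $M$, the Cesàro averages $\frac{1}{T}\sum_{t<T} M^t$ converge, and more usefully that along a suitable subsequence $M^{t}$ converges; in particular for all sufficiently large $t$ the value $\pi_0^T M_{u,v}^t f$ lies within any prescribed $\gamma$ of a limit determined by the recurrent structure of $M_{u,v}$. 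Since the promise only requires $t \geq 25^{|u|}$ and imposes no upper bound, I may push $t$ as large as needed.

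The heart of the argument is then a swapping/symmetry step. Consider the yes-instance built from a palindrome $u$ and a non-palindrome $v$, and the no-instance built from the \emph{same pair} but in the role where the non-palindrome sits in the first two slots; concretely I would compare $(ucucvcvc)^t$ against $(vcvcucuc)^t$ (choosing $u\in\pal$, $v\notin\pal$, so the first is a yes-string and the second, after relabelling $u\leftrightarrow v$, a no-string). The two block matrices $M_{u,v}$ and $M_{v,u}$ are built from the \emph{same} four sub-blocks $ucuc$, $vcvc$ read in the opposite order, so $M_{u,v} = A B$ and $M_{v,u} = B A$ for the appropriate sub-block matrices $A,B$. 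Since $AB$ and $BA$ have the same nonzero spectrum, and --- with a bit more care about the limiting (Cesàro) projections --- the quantities $\pi_0^T (AB)^t f$ and $\pi_0^T (BA)^t f$ differ by an amount that is controlled as $t\to\infty$, I can drive the acceptance probabilities of the yes- and no-instances to within less than the bounded-error gap $1-2\epsilon$ of each other, a contradiction. (If the cleanest route through the $AB$ vs.\ $BA$ comparison still leaves a gap, the fallback is to iterate the swap and average, or to pass to a limiting rtPFA behaviour via compactness of the space of stochastic matrices and apply a cut-and-paste pumping argument on the block structure.)

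The main obstacle I anticipate is making the $M_{u,v}$ versus $M_{v,u}$ comparison genuinely forced: $AB$ and $BA$ share eigenvalues but the relevant bilinear forms $\pi_0^T(AB)^tf$ and $\pi_0^T(BA)^tf$ need not be equal for finite $t$, and even their limits involve different left/right invariant vectors. Handling this rigorously will require either (i) choosing the large power $t$ to be a common near-period of both matrices so that both forms are near their Cesàro limits simultaneously, then showing those limits coincide because the limit depends only on the shared recurrent classes, or (ii) reducing to the purely classical statement that an rtPFA deciding $\promiseexptwinpal$ would yield a bounded-error rtPFA distinguishing, for infinitely many lengths, a palindrome from a non-palindrome presented as $ucuc$, contradicting the known impossibility of recognizing $\pal$ even in the promise / large-padding regime. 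I would present route (ii) if it goes through cleanly, as it localizes the difficulty to a citable hardness fact about $\pal$.
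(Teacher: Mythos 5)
Your primary route (i) has a genuine gap that I do not think can be repaired. For a \emph{fixed} pair $(u,v)$, the strings $(ucucvcvc)^t$ and $(vcvcucuc)^t$ are two concrete, distinct strings, and a finite automaton can perfectly well tell them apart (a two-state DFA already distinguishes $(ab)^t$ from $(ba)^t$ by its first letter), so no argument that only compares this one swapped pair can yield a contradiction. Concretely, $\pi_0^{T}(BA)^{t}f=(\pi_0^{T}B)(AB)^{t-1}(Af)$, so the two bilinear forms are evaluations of powers of the \emph{same} matrix against \emph{different} initial and final vectors; their Ces\`aro limits genuinely differ in general (take $A$ absorbing into state $1$ and $B$ absorbing into state $2$, with $f$ accepting only state $2$: the limits are $1$ and $0$). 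The shared-nonzero-spectrum fact controls eigenvalues, not these limits, and no choice of a ``common near-period'' fixes this. More fundamentally, the hardness of $\promiseexptwinpal$ cannot be located in any single pair $(u,v)$: for each fixed pair a sufficiently large DFA simply hard-codes $u$ and $v$. Any correct proof must quantify over the exponentially many palindrome/non-palindrome pairs of each length, i.e.\ it needs a counting argument in the style of Dwork--Stockmeyer, which route (i) never invokes.

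Your fallback route (ii) points in the right direction---it is essentially what the paper does---but the single sentence you devote to it omits the step that carries all the difficulty. The promise forces $t\geq 25^{|u|}$, so to use a hypothetical rtPFA for $\promiseexptwinpal$ on an instance of a $\pal$-like problem you must feed it an input exponentially longer than the string you are given; a realtime reduction cannot manufacture that padding. The paper's proof instead constructs a \emph{two-way} PFA $\mathcal{P'}$ for $\promisetwinpal$: it sweeps over $ucucvcv$ in an infinite loop, feeding one block per sweep to the simulated rtPFA $\mathcal{P}$ (recording $\mathcal{P}$'s state between sweeps), and after the $i$-th sweep adopts $\mathcal{P}$'s current decision with probability $25^{-16n}$, continuing otherwise. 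A geometric-series estimate shows the decision is adopted before the $25^{n}$-th sweep with probability less than $1-\sqrt[16]{1/e}$, so with probability greater than $0.9$ the adopted decision comes from a regime where $\mathcal{P}$ errs with probability at most $1/3$, giving overall correctness above $0.6$. This produces a bounded-error 2PFA for $\promisetwinpal$, hence for $\promisepal$, contradicting the Dwork--Stockmeyer bound. Note that the correct target of the contradiction is the \emph{2PFA} hardness of $\promisepal$, not an rtPFA hardness statement about $\pal$, precisely because the simulation is inherently two-way; your sketch would need to supply both this simulation and that chain of reductions to go through.
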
 

\subsection{Polynomial-expected-time algebraic exact restarting rtQCFA algorithm}
\label{sec:polynomial}

Our promise problem for this section is as follows:
\[
	\promiseeq = (\promiseeqyes,\promiseeqno), \mbox{ where}
\]
\[ 
	\promiseeqyes = \{ a^mba^mba^n | m \neq n \} \mbox{ and } \promiseeqno ~ = \{  a^mba^nba^m | m \neq n \}.
\]

\begin{theorem}
	\label{thm:polynomial}
	There is an exact algebraic restarting rtQCFA algorithm solving $ \promiseeq $ in quadratic expected time. Moreover, $ \promiseeq $ can be solved by a Las Vegas rational one-way QFA in linear time or by an exact rational two-way QFAs in linear expected time, where both models have a quantum head \cite{KW97,Yak12C}. (See Appendix \ref{app:polynomial} for the proof)
\end{theorem}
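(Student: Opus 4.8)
The plan is to first use the promise to collapse $ \promiseeq $ to a pair of block-length comparisons, and then to solve each comparison with the familiar single-qubit ``rotate and measure'' primitive, combined so that the overall machine is exact. Write the input as $ w = a^{i}ba^{j}ba^{k} $. By the promise exactly one of the equalities $ i=j $ and $ i=k $ holds and the other fails strictly; moreover $ w \in \promiseeqyes \Leftrightarrow i=j $ (hence also $ i \neq k $) and $ w \in \promiseeqno \Leftrightarrow i=k $ (hence also $ i \neq j $). So it suffices for a restarting machine to run, in a loop, one test that with zero error occasionally \emph{certifies} $ i \neq j $---upon which it halts and rejects, since $ w $ is then forced to be a no-instance---and one test that occasionally certifies $ i \neq k $, upon which it halts and accepts; if neither fires on a sweep, it restarts.

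Each test is implemented on one qubit as follows. Fix $ \theta $ with $ \cos\theta $ algebraic and $ \theta/\pi $ irrational. Starting from $ \ket{0} $, during one designated $ a $-block rotate the qubit by $ +\theta $ per $ a $, during a second designated $ a $-block rotate by $ -\theta $ per $ a $, and leave it fixed elsewhere; at the right end-marker measure in the computational basis. If the two designated blocks are equally long the net rotation is the identity and the outcome is $ \ket{0} $ with certainty; if their lengths differ by $ \Delta \neq 0 $ the outcome is $ \ket{1} $ with probability $ \sin^{2}(\Delta\theta) > 0 $, the sought certificate of inequality. Test~A uses block~$ 1 $ as the ``$+$'' block and block~$ 2 $ as the ``$-$'' block; test~B uses block~$ 1 $ and block~$ 3 $. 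A fair coin flipped at the start of each sweep (a unitary plus a measurement on a fresh ancilla) picks which test that sweep performs. On a yes-instance test~A can never fire while test~B fires with positive probability on every sweep it runs, so the loop halts and accepts with probability $ 1 $; symmetrically on a no-instance. Hence $ \promiseeq $ is solved exactly, with algebraic amplitudes.

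The substantive point is the running time. A sweep costs $ O(|w|) $ steps, and since $ 1 \le \Delta \le |w| $, a number-theoretic lower bound of the shape $ \abs{\sin(m\theta)} = \Omega(m^{-c}) $ for a suitably chosen algebraic $ \theta $ makes each sweep succeed with probability $ \Omega(|w|^{-2c}) $, already yielding a polynomial expected time. Getting down to $ O(|w|^{2}) $ cannot be done by plain measure-and-restart, because the worst-case per-sweep success probability is only $ \Theta(|w|^{-2}) $; instead one inserts an amplitude-amplification stage into the loop. Since the test unitary is a plane rotation, its inverse is realised by another \emph{forward} sweep with the two signs exchanged, so Grover-style reflections can be interleaved with re-sweeps without ever moving the head left; after $ O(1/\abs{\sin\Delta\theta}) = O(|w|) $ re-sweeps the amplitude on $ \ket{1} $ is $ \Omega(1) $, giving $ O(|w|^{2}) $ total expected time. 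I expect the main obstacle to be executing this amplification on a finite-memory restarting device: the iteration count must be chosen without an unbounded counter, which can be arranged by using the input head itself as a ruler of length $ |w| $ (an upper bound on the number of iterations ever needed), and the Diophantine exponent $ c $ for the concrete $ \theta $ must be pinned down.

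For the two models with a \emph{quantum head} no rotations are needed and rational amplitudes suffice. A head that may be placed in a superposition of tape cells compares two $ a $-blocks exactly in a single forward sweep: prepare an equal superposition of the head sitting at the two block boundaries, advance both branches in lock-step, and detect inequality at the first instant one branch reaches a boundary or end-marker while the other has not, as in \cite{KW97} and \cite{Yak12C}. Performing the two comparisons ($ i \overset{?}{=} j $ and $ i \overset{?}{=} k $), again selected by a coin, within one such sweep and combining the outcomes as above gives a Las Vegas rational one-way QFA with quantum head running in linear time---it answers ``don't know'' only when a comparison comes out inconclusive, which happens with bounded probability---while iterating the same procedure with a two-way (restarting) head, each round costing $ O(|w|) $ steps and succeeding with constant probability, gives an exact rational two-way QFA with quantum head running in linear expected time.
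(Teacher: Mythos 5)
Your core construction is the paper's: write $w=a^iba^jba^k$, observe that under the promise a certificate of $i\neq j$ forces $w\in\promiseeqno$ and a certificate of $i\neq k$ forces $w\in\promiseeqyes$, and in a restarting loop flip a coin to run a one-qubit rotate-and-measure comparison on one of these two block pairs, halting only when the qubit is observed off $\ket{0}$. The paper implements the comparison by reusing the quantum phase $\aweq$ of the Ambainis--Watrous 2QCFA for $\eq$ (rotations by $\pm\sqrt{2}\pi$, so equal blocks return the qubit to its initial state with certainty while blocks differing by $\Delta$ yield the orthogonal outcome with probability at least $\frac{1}{2\Delta^2}$), and it treats the quantum-head models exactly as you do, by invoking the one-pass block-comparison algorithms of \cite{KW97,Yak12C} and the Las-Vegas-to-exact conversion. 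Up to the choice of angle and of the branching probabilities, this part of your proposal coincides with the paper's proof.

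Where you diverge is the time analysis, and that is also where the genuine gap lies. You correctly compute that plain measure-and-restart costs $O(|w|\cdot\Delta^{2})$ expected time, but your amplitude-amplification fix does not go through on a restarting rtQCFA: the number of Grover reflections required is $\Theta(1/\abs{\sin(\Delta\theta)})$, which grows with the input, whereas the finite control can only count a constant number of sweeps between measurements; and the ``input head as a ruler'' device is unavailable to a head that only moves rightward and is reset to the left end-marker at every restart. Without a valid stopping rule the Grover rotation overshoots and the final off-$\ket{0}$ amplitude need not be $\Omega(1)$; the standard remedy (exponentially increasing guesses for the iteration count) again requires an unbounded counter. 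Note that the paper performs no amplification at all---it simply runs the plain restarting loop and asserts the quadratic bound---so dropping your amplification stage recovers exactly the paper's argument, with expected time $O(|w|\cdot\Delta^{2})$; your observation that this is only cubic in the worst case ($\abs{\Delta}=\Theta(|w|)$) is a fair criticism of the stated bound rather than something your extra stage repairs.
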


\begin{theorem}
	\label{thm:promiseeq}
	$ \promiseeq $ cannot be solved by any bounded-error $ o(\log \log n) $-space PTMs in sub-exponential expected time. (See Appendix \ref{app:promiseeq} for the proof)
\end{theorem}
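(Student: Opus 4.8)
The plan is to assume, for contradiction, that a bounded-error probabilistic Turing machine $\mathcal{P}$ solves $\promiseeq$ using $s(n)=o(\log\log n)$ space in sub-exponential expected time $t(n)$, and to derive a contradiction by replaying the crossing-sequence / transfer-operator analysis of Dwork and Stockmeyer~\cite{DS92} (see also Freivalds and Karpinski~\cite{FK94}) on a carefully chosen family of inputs that respects the promise. Note we cannot simply quote the fact that every \emph{language} accepted by a bounded-error PTM in $o(\log\log n)$ space is regular, since $\promiseeq$ is a promise problem and $\mathcal{P}$ is unconstrained on inputs violating the promise; this is precisely why the statement needs a running-time hypothesis.

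\emph{The input family.} For each sufficiently large $m$ consider
\[
  y_m = a^{m}ba^{m}ba^{m+1}\in\promiseeqyes
  \qquad\text{and}\qquad
  z_m = a^{m}ba^{m+1}ba^{m}\in\promiseeqno ,
\]
both of which obey the promise (each has two equal $a$-blocks and one that differs by one) and have length $3m+3=:n$, so $m=\Theta(n)$. By hypothesis $\mathcal{P}$ accepts $y_m$ and rejects $z_m$, each with probability at least $1-\epsilon$, so the two output distributions are at total-variation distance at least $1-2\epsilon$, a positive constant. Crucially $y_m$ and $z_m$ differ only by exchanging the lengths $m$ and $m+1$ of their second and third $a$-blocks, so $\mathcal{P}$ must, in effect, detect which of the last two blocks equals the reference block $a^{m}$ --- a unary-equality test against a reference of length $m=\Theta(n)$.

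\emph{Transfer operators and the trade-off.} I would encode a boundary configuration as (control state, head offset inside the block, worktape contents); there are $D:=2^{O(s(n))}=(\log n)^{o(1)}$ of them at a block boundary. Truncating the run after $T:=t(n)/\delta$ steps costs probability only $\delta$ by Markov's inequality, so only crossing sequences of length at most $T$ are relevant, and, exactly as in~\cite{DS92}, the probability that $\mathcal{P}$ accepts --- as a function of the block lengths $(\ell_1,\ell_2,\ell_3)$ --- is a fixed algebraic combination of the length-truncated substochastic transfer operators of the three blocks, each of which is itself an algebraic function of $A^{\ell_i}$ for one fixed substochastic matrix $A$ of dimension $\mathrm{poly}(D)$. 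The Dwork--Stockmeyer trade-off then forces that such a bounded-dimension, algebraically structured quantity cannot separate $\ell_2=m$ from $\ell_2=m+1$ with error bounded away from $\tfrac12$ unless either $D\ge 2^{\Omega(\log\log m)}$, i.e. $s(n)=\Omega(\log\log n)$, or the truncation length --- hence the expected running time --- is at least $2^{\Omega(m/\mathrm{poly}(D))}=2^{\Omega(n/(\log n)^{o(1)})}=2^{n^{1-o(1)}}$. Both conclusions contradict the assumptions that $s(n)=o(\log\log n)$ and that $t(n)$ is sub-exponential (indeed even polynomial, as claimed in the abstract), which finishes the argument.

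\emph{The hard part.} The technical core is making the transfer-operator calculus rigorous for a \emph{two-way}, possibly \emph{looping} probabilistic machine: this needs the last-exit / absorbing-Markov-chain bookkeeping of~\cite{DS92,FK94}, a disciplined use of the expected-time bound both to truncate the computation and to discard the small-probability ``very long run'' tail, and then the extraction of the linear-recurrence structure of $\ell\mapsto A^{\ell}$ and of the resulting quantitative separation. The only problem-specific wrinkle is keeping every auxiliary input family inside the promise, which is why $y_m$ and $z_m$ are obtained from each other by a single length-swap, and why --- absent a time bound --- a promise problem need not inherit the language-level $\Omega(\log\log n)$-space barrier.
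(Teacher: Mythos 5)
There is a genuine gap, and it sits exactly at the step you flag as ``the hard part.'' The paper does not redo the transfer-operator calculus at all: it takes Lemma~4.3 of Dwork--Stockmeyer (the bound $\left( \alpha_\epsilon c \left( \log T(n) + \log cn \right) \right)^{n^2} \geq N_{\mathtt{L}}(n)$ relating expected time to the number of pairwise $n$-dissimilar strings) as a black box, observes that its proof survives verbatim if the dissimilarity relation is redefined for promise problems ($w \nsim_{\mathtt{P},n} w'$ iff some common suffix $u$ sends one of $wu, w'u$ into $\mathtt{P_{yes}}$ and the other into $\mathtt{P_{no}}$), and then exhibits the witness set $\{a^i \mid 1 \leq i \leq m\}$, which is pairwise $(3m+1)$-dissimilar via the separator $u = ba^iba^j$, since $a^iba^iba^j \in \promiseeqyes$ and $a^jba^iba^j \in \promiseeqno$. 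That gives $N_{\promiseeq}(n) = \Omega(n)$ and the time bound follows (with $c$ replaced by $2^{ds(n)}$ for $o(\log\log n)$-space PTMs). Your proposal correctly identifies both key difficulties --- that the promise blocks the ``$o(\log\log n)$-space implies regular'' shortcut, and that all auxiliary inputs must respect the promise --- but your witness family does not plug into this machinery: $y_m = a^mba^mba^{m+1}$ and $z_m = a^mba^{m+1}ba^m$ have no common suffix of the form required by the dissimilarity lemma (prefix \emph{and} suffix differ), so they form no dissimilar pair, and a single pair per length is in any case not enough for a counting argument.

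More seriously, the sentence ``the Dwork--Stockmeyer trade-off then forces that such a bounded-dimension, algebraically structured quantity cannot separate $\ell_2 = m$ from $\ell_2 = m+1$'' is not a consequence of anything in~\cite{DS90}, and taken literally it is false: a two-state deterministic transfer operator separates $a^m$ from $a^{m+1}$ by parity for every $m$. The impossibility only emerges because the machine must carry the \emph{reference} length $m$ across a block boundary for all $m$ simultaneously, and making that precise is intrinsically a pigeonhole argument over linearly many reference lengths --- which is exactly the pairwise-dissimilarity count you have omitted. To repair your argument you would replace the family $\{(y_m,z_m)\}$ by the prefixes $\{a^i\}_{i \leq m}$ with the common separators $ba^iba^j$, at which point you have reconstructed the paper's proof; as written, the core quantitative step is asserted rather than derived.
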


\subsection{Succinctness of realtime QFAs}
\label{sec:succinctness}

For a given positive integer $ k $, $ \tt EVENODD^k = (EVENODD^k_{yes},EVENODD^k_{no}) $ is a promise problem \cite{AY12} such that
\begin{itemize}
\item $ \mathtt{EVENODD^k_{yes}}=\{a^{i2^k} \mid i \mbox{ is a nonnegative even integer}\}$, and 
\item $ \mathtt{EVENODD^k_{no}}~=\{a^{i2^k} \mid i \mbox{ is a nonnegative odd integer}\}$.
\end{itemize}
Ambainis and Yakary{\i}lmaz \cite{AY12} showed that $ \tt EVENODD^k $ can be solved by a 2-state MCQFA exactly, but, the corresponding probabilistic automaton needs at least $ 2^{k+1} $ states. We~show in Theorem~\ref{thm:succintness} that allowing errors in the output does not help in decreasing the space requirement. Proof of Theorem~\ref{thm:succintness} is provided in Appendix~\ref{app:succintness}.

\begin{theorem}
	\label{thm:succintness}
	Bounded-error rtPFAs need at least $ 2^{k+1} $ states to solve $ \tt EVENODD^k $.
	\end{theorem}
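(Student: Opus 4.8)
The plan is to analyze the acceptance probability of a hypothetical $n$-state bounded-error rtPFA $\mathcal{P}$ for $\evenodd$ as a function of the input length, and to show that the ``period-$2^{k+1}$ alternation'' forced by the promise is incompatible with $n<2^{k+1}$. Since $\evenodd$ is over the unary alphabet $\{a\}$, the machine $\mathcal{P}$ is described -- apart from the end-marker steps -- by a single $n\times n$ row-stochastic matrix $A$, the transition matrix on the letter $a$, and the probability that $\mathcal{P}$ accepts $a^{m}$ is $f(m)=u^{\top}A^{m}\eta$, where $u\in[0,1]^{n}$ is the state distribution just after the left end-marker and $\eta\in[0,1]^{n}$ results from the $0/1$ indicator vector of accepting states by applying the (stochastic) right-end-marker transition. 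So the first step is simply to record this normal form and reduce the claim to a statement about the bounded real sequence $(f(m))_{m\ge0}$.

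The second step is the spectral decomposition of $A^{m}$. Boundedness of $\lVert A^{m}\rVert$ (all entries lie in $[0,1]$) already rules out eigenvalues of modulus $>1$ and Jordan blocks of size $\ge2$ for unit-modulus eigenvalues. Writing the state set as a disjoint union of transient states and closed recurrent classes $R_{1},\dots,R_{r}$ with periods $h_{1},\dots,h_{r}$, and applying the Perron--Frobenius description of the peripheral spectrum to each irreducible diagonal block, the unit-modulus eigenvalues of $A$ are exactly $h_{i}$-th roots of unity, with $h_{i}\le\lvert R_{i}\rvert\le n$. Hence
\begin{equation*}
  f(m)=P(m)+e(m),\qquad P(m)=\sum_{|\lambda|=1}\bigl(u^{\top}E_{\lambda}\eta\bigr)\lambda^{m},\qquad e(m)\to 0\ \text{ as }m\to\infty,
\end{equation*}
where $E_{\lambda}$ is the spectral projector onto the $\lambda$-eigenspace, $P$ is periodic, and its least period $L$ divides $\operatorname{lcm}(h_{1},\dots,h_{r})$; in particular $v_{2}(L)\le\max_{i}v_{2}(h_{i})$.

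The third step turns the promise into a divisibility constraint on $L$. Set $\delta=\tfrac12-\epsilon>0$ and choose $M$ with $\lvert e(m)\rvert<\delta$ for all $m\ge M$. For even $j$ the string $a^{\,j2^{k}}$ is a yes-instance of $\evenodd$ and for odd $j$ it is a no-instance, so $f(j2^{k})\ge1-\epsilon$ respectively $f(j2^{k})\le\epsilon$; therefore $P(j2^{k})>\tfrac12$ for all large even $j$ and $P(j2^{k})<\tfrac12$ for all large odd $j$. But $j\mapsto P(j2^{k})$ is periodic with period $L':=L/\gcd(L,2^{k})$, so if $L'$ were odd we could take a large $j$ with $j$ and $j+L'$ of opposite parity and get $P(j2^{k})=P((j+L')2^{k})$ simultaneously $>\tfrac12$ and $<\tfrac12$, a contradiction; hence $L'$ is even. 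Since $\gcd(L,2^{k})=2^{\min(v_{2}(L),k)}$ gives $v_{2}(L')=v_{2}(L)-\min(v_{2}(L),k)$, evenness of $L'$ forces $v_{2}(L)\ge k+1$. Combined with $v_{2}(L)\le\max_{i}v_{2}(h_{i})$, some $h_{i}$ is divisible by $2^{k+1}$, and as $h_{i}\le n$ we conclude $n\ge2^{k+1}$.

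The routine parts are the normal form $f(m)=u^{\top}A^{m}\eta$, the geometric decay of $e(m)$, and the $2$-adic bookkeeping; the argument is insensitive to whether $\epsilon=0$ or $\epsilon\in(0,\tfrac12)$. The step that needs genuine care -- the \emph{main obstacle} -- is the spectral claim that the unit-modulus eigenvalues of an arbitrary (not assumed irreducible, not assumed rational) stochastic matrix are roots of unity of order at most $n$ and are semisimple: semisimplicity and the absence of modulus-$>1$ eigenvalues come for free from boundedness of $\lVert A^{m}\rVert$, while the root-of-unity statement I would get from the block-triangular transient/recurrent normal form together with the classical Perron--Frobenius theorem for irreducible nonnegative matrices, paying a little attention to edge cases such as the eigenvalue $1$ appearing in several recurrent blocks (still semisimple) and the empty input $a^{0}$ (unneeded, since we only use large $m$).
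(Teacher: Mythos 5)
Your proof is correct and follows essentially the same route as the paper's: both reduce the unary rtPFA to powers of a single stochastic matrix, establish that the acceptance probability is eventually periodic with period dividing the lcm of the recurrent-class periods (each bounded by the number of states), and derive a contradiction from the promise forcing the $2$-adic valuation of that period to be at least $k+1$. The differences are presentational only: you obtain the eventual periodicity from Perron--Frobenius rather than citing the Markov-chain limit fact the paper quotes, and your parity argument comparing $j$ with $j+L'$ replaces the paper's more laborious enumeration of ``yes'' and ``no'' regions.
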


\section{Noncontextual inequalities from automata}

We begin by reformulating the Peres-Mermin game in terms of inequalities. Let $\langle \textsf{A}_i \textsf{A}_j \textsf{A}_k \rangle$ be the expected parity of the corresponding entries of the square. We~associate with each strategy, a value of the game $\expect{\chi}$, which is given by
\begin{equation*}
\begin{array}{ll@{\;\langle\;}l@{\;\rangle\;}l@{\;\langle\;}l@{\;\rangle\;}l@{\;\langle\;}l@{\;\rangle}l}
\expect{\;\chi\;} =&\hspace*{-.25em}& \textsf{A}_1\textsf{A}_2\textsf{A}_3 & + & \textsf{A}_4\textsf{A}_5\textsf{A}_6 & + & \textsf{A}_7\textsf{A}_8\textsf{A}_9&\\
&\hspace*{-.25em}+&\textsf{A}_1\textsf{A}_4\textsf{A}_7 & + & \textsf{A}_2\textsf{A}_5\textsf{A}_8 & - & \textsf{A}_3\textsf{A}_6\textsf{A}_9&.
\end{array}
\end{equation*} 
The classical bound is $\expect{\chi} \leqslant 4$, while the quantum bound is given by $\expect{\chi} \leqslant 6$. We~can now construct similar inequalities for the promise problems defined in this paper. The general idea behind the inequalities is to construct a game based on quantum and classical automata separations. Assume Bob is restricted to either~$N$ bits of classical memory or~$N$ quantum bits and Alice has the task of verifying what type of memory is available to Bob. She can query Bob multiple times on a pre-selected problem that is known to both of them. Conditioned on the classical memory requirement for the problem the idea then is for Alice to iteratively query Bob on input strings of increasing length. Eventually Bob's classical memory becomes insufficient to correctly answer the query and his best response is a random guess.

$\pal$ and $\promisepal$ can be solved in log space. As a consequence, Alice requires an exponential number of queries in~$N$ before Bob's memory is exhausted. The classical exponential memory requirement for $\evenodd$ means that number of queries need only be logarithmic before a violation is observed. On the other hand, $\evenodd$ is not a single problem but a family of promise problems and the classical memory requirement is for rtPFAs. For $\promisepal$ we obtain zero error for the quantum strategy while for the classical strategy bounded-error is not possible for 2PFAs.

We base the inequality we present on $\evenodd$. The arguments carry over to $\pal$ and $\promisepal$ as well. 
On a given query Bob receives as input an integer~$k$ and a unary string~$w=a^l$ that is promised to be from either $\tt EVENODD^k_{yes}$ or $ \tt EVENODD^k_{no}$, i.e., $l=i2^k$. The task for Bob is to determine the membership of string~$w$, i.e., whether~$i$ is even or odd. So, he outputs~``$+1$'' if~$i$ is even and~``$-1$'' otherwise. The identification can be made for any~$k$ if Bob has unbounded memory. If Bob is restricted to have memory~$2^{n+1}$ then the identification can still be made perfectly for all integer inputs to Bob with~$k \leqslant n$. It becomes impossible to perform this identification perfectly when~$k>n$. In this case the amount of memory available to Bob is not sufficient to determine classically the membership of the input string~$w$.

In the quantum case, a perfect strategy exists for all~$k$, if Bob is allowed access to a single qubit~$\ket{\psi}$. The state is initialized to~$\ket{0}$ and for each~``$a$'' in the string~$w$ Bob applies the rotation
$
\op{U}_a = 
\begin{pmatrix} 
\cos \theta & -\sin \theta \\
\sin \theta & \hphantom{-}\cos \theta
\end{pmatrix},
$
with~$\theta = \frac{\pi}{2^{k+1}}$. Bob measures in the computational basis once the input is processed and realizes that~$i$ is odd if he obtains~$\ket{1}$ or~$-\ket{1}$ and~$i$ is even is the result is~$\ket{0}$ or~$-\ket{0}$. This procedure guarantees that Bob will always correctly identify the string~$w$.

Assume that the amount of memory available to Bob is~$N$ but we do not know~$N$. Let $\expect{\op{A}_y^{k}}$ be the expected value of Bob's output when the input is~$k$ and~$i$ is even. Similarly, $\expect{\op{A}_n^{k}}$ represents the expected value for input~$k$ and~$i$ odd. One way to verify whether Bob is quantum or classically memory bound is to initially query for a choice of~$k$ and then sequentially query for increasing size of~$k$. For each choice of~$k$ we choose~$i$ to be odd or even with a uniform random distribution. We define~$V$ to measure how successful Bob is in correctly identifying the string~$w$. Performing the procedure~$Q$ times gives us a value
\begin{equation}
V= \sum_{j=1}^Q \expect{\textsf{A}_y^{k=4j}} - \expect{\textsf{A}_n^{k=4j}},
\end{equation}
where we have chosen to increase the input~$k$ by multiples of~$4$ at each iteration. If~$k \leqslant \log N -1$, then for both the classical and quantum case Bob can achieve $V=Q.$ For~$k > \log N -1$, since there is no perfect strategy in the classical case we have~$V<Q$, while the quantum strategy still achieve~$V=Q$. The classical value can be made much tighter since the optimal classical strategy for~$k > \log N -1$ is just a random guess. We have shown in Section~\ref{sec:succinctness} that allowing error classically does not help in terms of reducing the memory requirement, i.e. bounded-error rtPFAs need at least $ 2^{k+1} $ states to solve $ \tt EVENODD^k $. This implies that the classical value is bounded by~$\frac{\log N -1}{4}$. Similar inequalities may be derived for $\pal$ and $\promisepal$ and they are summarized in Tables~\ref{tab:classical} and~\ref{tab:quantum}.
\vspace*{-1.5em}
\begin{table}[!ht]
\centering
\begin{tabular}{@{}cccc@{}}
Problem & Type  & Classical Memory & Inequality Value \\\hline \\ [-1.25ex]
$\pal$ & language recognition & $\log n$ &$\frac{2^N}{4}$  \\[1ex]
$\evenodd$ & family of promise problems & $2^{k+1}$ &$\frac{\log N -1}{4}$       \\[1ex]
$\promisepal$ & promise problem & $\log n$ &$\frac{2^N}{4}$       
\end{tabular}
\caption{For both $\pal$ and $\promisepal$ no 2PFA exists that solves the problem with bounded error. For the family of $\{\evenodd \; | \; k>0\}$, there is no bound on the number of states for
real-time PFAs that solve the members of this family with bounded error. Given an input string of size~$n$ and classical memory~$N$, the table gives the memory requirement for solving the specific instance and the value attained for the non-contextual inequality.}
\label{tab:classical}
\end{table}
\vspace*{-4.0em}
\begin{table}[!ht]
\centering
\begin{tabular}{@{}cccccc@{}}
Problem & Quantum Model && Quantum Memory && Quantum Value  \\\hline \\ [-1.25ex]
$\pal$ & 2QCFA && qubit &&$Q-\delta$  \\[1ex]
$\evenodd$ & Real-time && qubit &&$Q$       \\[1ex]
$\promisepal$ & 2QCFA && qubit &&$Q$       
\end{tabular}
\caption{The weakest known quantum models that solve the given problems and the associated error in the solution. The value attained for the inequality is related to the number of runs~$Q$ of the game.}
\label{tab:quantum}
\end{table}
\vspace*{-2.5em}

It may be possible to improve these inequalities by finding other problems for which we obtain a similar separation as $\promisepal$ but with an exponential classical memory requirement and a polynomial time quantum automata.

\section{Discussion}
\label{discuss}

Perhaps the most alluring charm of quantum automata separations is the possibility they offer of constructing a computational device, that could solve a problem which no classic device with finite memory could. Rather than just ruling out hidden variables with an exponential size increase, these computational devices could be used in principle to rule out arbitrary size hidden variables by increasing the problem input size. The thorny issue though is a trade-off between the memory utilized and the amount of precision required in the interactions with the quantum memory. This precision requirement appears either in the form of the matrix entries for the unitaries, as in the case of $\evenodd$ \emph{or} in form of the ability to resolve two states that may be arbitrarily close to each other, as in $\pal$. 

Any experimental setup is always restricted by some level of precision. If not by technological limitations then by more fundamental restrictions such as the uncertainty principle. The question that we are inevitably led to consider following this line of reasoning is whether it is possible to retain the quantum advantage in the automata model while still requiring finite precision in our interactions with the memory.

The intersection of ideas from classical simulation of contextuality and automata theory leads us to the notion of \emph{Finite Precision Quantum Automata} (FPQA). A FPQA satisfies in addition to the usual automata requirements the constraints that for any two unique unitaries~$\textsf{U}_i$ and~$\textsf{U}_j$ applied during the computation we have $\abs{\textsf{U}_i - \textsf{U}_j} > \epsilon$ and for any two different states~$\ket{\psi}$ and $\ket{\phi}$ obtained during the computation we have $\abs{\inner{\psi}{\phi}}^2 > \delta$. 

It is not clear whether we can construct a FPQA that still manages to provide a computational advantage over classical automata. Meyer~\cite{Meyer99} has argued that the Kochen-Specker theorem~\cite{Kochen67} does not hold when only finite precision measurements available. Clifton and Kent~\cite{Clifton00} have generalized the arguments of Meyer for POVM's. On ther other hand Mermin and Cabello~\cite{Cabello02} have indepedently argued that such nullification theorems do not hold. Recently Cabello and Cunha~\cite{Cabello11b} have proposed a two-qutrit contextuality test, claiming it to be free of the finite precision loophole. These tests though do admit a finite memory simulation model. Constructing a FPQA that yields a separation over the classical models would not admit a finite memory simulation model and consequently provide a stronger separation.

Even if the FPQA model does turn out to be equivalent to the classical model in terms of the class of problems it solves, it does not take away from the succintness advantage of the quantum model. In the previous section we argued that quantum automata separations serve as witnesses for distingishing between genuinely quantum and space bounded classical players. We can flip the reasoning around and observe that simulating quantum contextuality is an inherently classical memory intensive task. This difficulty can be used to constuct classical Proofs of Space as identified by Dziembowski et al.~\cite{Dziembowski13}. The idea is to establish that Bob has access to a certain amount of memory. Bob is asked to simulate an appropriately chosen quantum contextuality scenario. This would require exponential memory on Bob's side while the verifier could directly check that Bob's output satisfies the required quantum correlations. Note also that by definition, pre-computation does not allow Bob to simulate quantum contextuality. This task would require identification of generalizable quantum contextual promise problems. We note that use of the term `contextual' here is different than the standard notion of context-free languages.

\bibliographystyle{plain}
\bibliography{tcs}

\newpage

\appendix

\section{The proof of Theorem \ref{thm:nosubPTMforPAL}}
\label{app:nosubPTMforPAL}

We~give a proof of the fact that 2PFAs cannot solve promise problem $ \promisepal $ with bounded-error. We use the technique given in~\cite{DS92} that shows $ \pal $ cannot be recognized by bounded-error 2PFAs. We begin with the following fact.

\begin{fact}
	\label{fact:DS92}
	(Theorem 3.3 on Page 809 of \cite{DS92})
	\\ \\
	Let $ A,B \subseteq \Sigma^*$ with $ A \cap B = \emptyset$. Suppose there is an infinite set $I$ of positive integers and, for each $m \in I$, a set $W_m \subseteq \Sigma^*$ such that
	\begin{enumerate}
	\item $ |x| \leq m $ for all  $ x \in W_m$,
	\item for every integer $k$ there is an $m_k$ such that $|W_m| \geq m^k$ for all $ m \in I $ with $ m \geq m_k $, and
	\item for every $ m \in I $ and every $ x,x' \in W_m $ with $ x \neq x' $, there are strings  $ y,z \in \Sigma^* $ such that either $ yxz \in A $ and $ yx'z \in B $, or $ yxz \in B $ and $ yx'z \in A $.
	\end{enumerate}
	Then no bounded-error 2PFA separates $ A $ and $ B $.
\end{fact}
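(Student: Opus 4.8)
The plan is to argue by contradiction: assume a bounded-error 2PFA $M$ with $s$ states and error bound $\epsilon<\frac12$ separates $A$ and $B$, and derive a counting contradiction against conditions 1--3. The central object is the \emph{interaction} (or transfer) matrix summarizing how $M$ behaves on a factor $x$ when $x$ is embedded inside a larger input. Concretely, I would record, for the finite set of interface configurations at the two boundaries of $x$ (a boundary together with one of the $s$ states, so $2s$ configurations in all), the matrix $T_x$ whose entries give the probability that, starting from a given interface configuration, $M$ next leaves the $x$-region in a given interface configuration, together with the probabilities that it halts-and-accepts or halts-and-rejects while the head is inside $x$. Assuming without loss of generality that $M$ halts with probability $1$, each $T_x$ is well defined via the standard absorbing-Markov-chain computation $T_x=B(I-Q)^{-1}B'$, where $Q$ collects the substochastic transitions internal to $x$ and $B,B'$ the boundary transitions; note $T_x$ depends only on $M$'s transition function read on $x$, independently of whether composite inputs lie in $A\cup B$. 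The structural fact I would exploit is that the acceptance probability of any composite $yxz$ is a fixed rational function of $T_y,T_x,T_z$ alone.

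Next I would turn condition 3 into a separation statement. If $x,x'\in W_m$ are distinct, then there is a context $(y,z)$ placing one of $yxz,yx'z$ in $A$ and the other in $B$; since $M$ has error at most $\epsilon$ on members of $A\cup B$, its acceptance probabilities on these two composite inputs differ by at least $1-2\epsilon>0$. Because acceptance is the same rational function of the transfer matrices in both cases, with only the middle matrix changing from $T_x$ to $T_{x'}$, this forces $T_x$ and $T_{x'}$ to be genuinely \emph{distinguishable}: no single behavioral class can contain both. Hence the assignment $x\mapsto(\text{behavioral class of }T_x)$ is injective on $W_m$, and by condition 1 every $x\in W_m$ has $|x|\le m$, so $|W_m|$ is bounded by the number of pairwise-distinguishable behaviors realizable by factors of length at most $m$.

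The crux, and the step I expect to be the main obstacle, is the counting lemma bounding that number by a polynomial $p(m)$ whose degree depends only on $s$ and $\epsilon$. The difficulty is that the distinguishing contexts supplied by condition 3 may be arbitrarily long, and long contexts can amplify even minuscule differences between transfer matrices, so one cannot simply count $\delta$-separated matrices at a single fixed resolution $\delta$. The quantitative analysis of Dwork and Stockmeyer overcomes this by analyzing the absorbing-chain structure of $T_x$ and showing that, to remain distinguishable with an acceptance gap of $1-2\epsilon$, the behaviorally relevant coordinates of $T_x$ need to be resolved only to precision of order $1/\mathrm{poly}(m)$ for factors of length $\le m$, yielding polynomially many classes. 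I would stress that this bound is genuinely polynomial and cannot be improved to a constant: bounded-error 2PFAs with unbounded expected time do recognize some nonregular languages such as $\{a^nb^n\}$, so no $m$-independent bound on the number of distinguishable behaviors can hold. This is exactly why the hypothesis demands the \emph{super}-polynomial growth $|W_m|\ge m^k$ for every $k$, rather than merely super-constant growth.

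Finally I would assemble the contradiction. By the counting lemma, $|W_m|\le p(m)$ for all $m\in I$, while by condition 2, for every $k$ there is an $m_k$ with $|W_m|\ge m^k$ for all $m\in I$ exceeding $m_k$. Choosing $k$ strictly larger than the degree of $p$, and then $m\in I$ large enough, gives $m^k>p(m)$, contradicting $|W_m|\le p(m)$. Hence no bounded-error 2PFA separates $A$ and $B$, which is the assertion of the Fact.
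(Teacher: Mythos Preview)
The paper does not prove this statement at all: it is labeled a \emph{Fact}, attributed verbatim to Theorem~3.3 of Dwork and Stockmeyer, and then invoked as a black box to establish Theorem~\ref{thm:nosubPTMforPAL}. There is therefore no ``paper's own proof'' to compare your proposal against.

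For what it is worth, your sketch is a faithful high-level outline of the actual Dwork--Stockmeyer argument: the reduction to transfer (word) matrices over the $O(s)$ boundary configurations, the observation that condition~3 forces distinct $x,x'\in W_m$ to have distinguishable matrices, and the polynomial counting lemma bounding the number of such distinguishable classes for factors of length $\le m$. You are also right to flag the counting lemma as the technical heart and to note why a constant bound is impossible. The one place where your write-up is thin is precisely that lemma: you defer to ``the quantitative analysis of Dwork and Stockmeyer'' rather than indicating how the $1/\mathrm{poly}(m)$ precision actually arises (in DS92 it comes from bounding how slowly the Markov-chain hitting probabilities can vary, via the structure of $(I-Q)^{-1}$ for substochastic $Q$ of dimension $O(sm)$). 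If you were writing this up in full, that is the step requiring real work; as a proof \emph{proposal} it is an honest acknowledgment of where the difficulty lies, and nothing in your outline is wrong.
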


\begin{theorem}
	Bounded-error 2PFAs cannot solve $ \promisepal $.
\end{theorem}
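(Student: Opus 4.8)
The plan is to apply Fact~\ref{fact:DS92} with $A = \promisepalyes$ and $B = \promisepalno$. Since a yes-instance has its left half in $\pal$ while a no-instance does not, $A \cap B = \emptyset$, so the hypothesis on $A,B$ holds and it remains to supply the index set $I$ and the family $\{W_m\}_{m\in I}$ and to verify the three conditions. Conditions~1 and~2 (the length bound and the super-polynomial growth of $|W_m|$) will be immediate from the choice below; the substantive step is condition~3, which requires that any two distinct members of $W_m$ can be flanked by a \emph{common} pair of strings $y,z$ so that one of $yxz,yx'z$ is a yes-instance and the other a no-instance.

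For each $n \geq 1$ I would take $m = 2n+1$, let $I = \{2n+1 : n \geq 1\}$, and set
\[
	W_m = \{\, w\, c\, w \mid w \in \{a,b\}^n \,\}.
\]
Then $|x| = m$ for every $x \in W_m$ (condition~1), and $|W_m| = 2^{n} = 2^{(m-1)/2}$, which exceeds $m^k$ for all sufficiently large $m$ (condition~2). For condition~3, given distinct $x = w\,c\,w$ and $x' = w'\,c\,w'$ with $|w| = |w'| = n$ and $w \neq w'$, I would pick the context $y = w^{r}$, $z = (w')^{r}$, so that $yxz = (w^{r}w)\, c\, (w(w')^{r})$ and $yx'z = (w^{r}w')\, c\, (w'(w')^{r})$; in each string the two halves have equal length $2n$ and there is a single $c$, so both are legitimate instances of the promise. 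The verification then reduces to the elementary fact (the same one underlying the classical hardness of $\pal$) that a concatenation of an equal-length string with the reverse of another lies in $\pal$ iff the two strings coincide: this gives $w^{r}w \in \pal$ and $w(w')^{r} \notin \pal$, hence $yxz \in \promisepalyes$, and $w^{r}w' \notin \pal$ and $w'(w')^{r} \in \pal$, hence $yx'z \in \promisepalno$. Fact~\ref{fact:DS92} then yields that no bounded-error 2PFA separates $\promisepalyes$ and $\promisepalno$, i.e.\ none solves $\promisepal$.

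The one place that needs care — and the reason for the ``doubled'' fooling set $w\,c\,w$ rather than, say, $W_m = \{a,b\}^n$ placed as a single half — is that condition~3 forces any two distinct elements to differ \emph{both} to the left of $c$ (so that a single left context $y = w^{r}$ completes one left half to a palindrome but not the other) \emph{and} to the right of $c$ (so that a single right context $z = (w')^{r}$ completes one right half to a palindrome but not the other). Using the same free block $w$ on both flanks makes both discrepancies follow from the single inequality $w \neq w'$; with an unconstrained single half the other half would be pinned by the context, forcing $yxz$ and $yx'z$ onto the same side of the promise and breaking condition~3. Once the fooling set is set up this way the remaining work is just bookkeeping with reversals, so the only real obstacle is finding the right $W_m$.
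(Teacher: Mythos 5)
Your proposal is correct and follows essentially the same route as the paper's proof: the same fooling set $W_m=\{\,wcw \mid w\in\{a,b\}^n\,\}$ and the same contexts $y=w^{r}$, $z=(w')^{r}$, with Fact~\ref{fact:DS92} applied to $A=\promisepalyes$ and $B=\promisepalno$. If anything, your bookkeeping is slightly more careful than the paper's, since you index $W_m$ so that $|x|\leq m$ actually holds (the paper's $W_m$ contains strings of length $2m+1$, a harmless but unaddressed mismatch with condition~1).
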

\begin{proof}
	Let $ m > 0 $ and $ \{u_1,\ldots,u_{2^m} \} $ be all strings in $ \{a,b\}^m $. We define $ W_m $ as follows:
	\begin{equation*}
		W_m = \{ u_i c u_i \mid 1 \leq i \leq 2^m  \}. 
	\end{equation*}
Since the size of $W_m$ is super-polynomial in $m$, i.e., $ |W_m| = 2^m $ we satisfy the second condition given in Fact~\ref{fact:DS92}.
	
Let $ x = u_{i} c u_{i} $ and $ x' = u_{j} c u_{j} $ be two different elements of $ W_m $ ($ i \neq j $). We pick $ y $ as $ u_i^r $ and $ z $ as $ u_j^r $. Then, $ yxz = u_i^r u_i c u_i u_j^r \in \promisepalyes $ and $ yx'z = u_i^r u_j c u_j u_j^r \in \promisepalno $. Thus, no bounded-error 2PFA can solve $ \promisepal $ due to Fact \ref{fact:DS92}.
	\qed
\end{proof}

Freivalds and Karpinski \cite{FK94} presented a slightly different fact to prove that $ \pal $ cannot be recognized by any bounded-error sublogarithmic probabilistic Turing machine (PTM). 

\section{The proof of Theorem \ref{thm:TWIN}}
\label{app:TWIN}

Instead of calling the 2QCFA (for $\tt PAL$) of Ambainis and Watrous \cite{AW02} as a whole, we call a subroutine of this algorithm and use it in our new algorithm. We~refer to it as $ \awpal $ and its details are as follows. $ \awpal $ uses a quantum register with 3 states, i.e. $ \ket{q_1}, \ket{q_2}, \ket{q_3} $, which is set to $ \ket{q_1} $ at the beginning. $ \awpal $ reads the given input, say  $ w \in \{a,b\}^* $, from left to right twice. In the first pass, it applies $ \op{U}_{\sigma} $ to the quantum register for each $ \sigma \in \{a,b\} $, and, in the second pass, it applies $ \op{U}_{\sigma}^{-1} $ to the quantum register for each $ \sigma \in \{a,b\} $, where
\begin{equation*}
	\op{U}_a = \dfrac{1}{5} \left(  
	\begin{array}{rrr}
		4 & 3 & 0
		\\
		-3 & 4 & 0
		\\
		0 & 0 & 5
	\end{array}	
	 \right)
	 \mbox{ and }
	\op{U}_b = \dfrac{1}{5} \left(  
	\begin{array}{rrr}
		4 & 0 & 3
		\\
		0 & 5 & 0
		\\
		-3 & 0 & 4
	\end{array}	
	 \right).
\end{equation*}
The quantum state of $\awpal $ after the two passes is given by:
\begin{equation*}
	\ket{\psi_{f}} = \op{U}_{|w|}^{-1} \op{U}_{|w|-1}^{-1} \cdots \op{U}_{w_2}^{-1} \op{U}_{w_1}^{-1} \op{U}_{|w|} \op{U}_{|w|-1} \cdots \op{U}_{w_2} \op{U}_{w_1} (1~0~0)^T
\end{equation*}
The two passes are followed by a measurement in the computational basis. From \cite{AW02}, we know that 
\begin{itemize}
	\item If $ w \in \pal $, then $ \ket{\psi_f} = \ket{q_1} $, and so, the measurement result is the first state.
	\item If $ w \notin \pal $, then $ \ket{\psi_f} \neq \ket{q_1} $ and the second or third state is observed with a probability at least $ 25^{-|w|} $.
\end{itemize}

Now, we describe an exact rational restarting rtQCFA for $ \exacttwinpal $, a modified version of $\promisepal$. i.e.
\begin{equation*}
	\promisetwinpal = (\promisetwinpalyes,\promisetwinpalno), \mbox{ where}
\end{equation*}
\begin{itemize}
	\item $ \promisetwinpalyes = \{ u c u c v c v | u,v \in \{a,b\}^+, |u|=|v|, u \in \pal, \mbox{ and } v \notin \pal \} $, and
	\item $ \promisetwinpalno = \{ u c u c v c v | u,v \in \{a,b\}^+, |u|=|v|, u \notin \pal, \mbox{ and } v \in \pal \} $.
\end{itemize}

Let $ w = ucucvcv $, a member of $ \promisetwinpal  $, be the input such that $ u,v \in \{a,b\}^* $ and $ |u|=|v| $. $ \exacttwinpal $ reads the input only from left to right in an infinite loop. A single iteration of this loop proceeds as follows. $ \exacttwinpal $ starts with quantum state $ (1 ~~ 0 ~~ 0)^T $. Then, $ \exacttwinpal $ splits the computation into two branches, $ branch_1 $ and $ branch_2 $. For this purpose, $ \exacttwinpal $ applies $ \op{U}_a $ to the quantum register and then measures it in the computational basis. $ \exacttwinpal $ continues with $ branch_1 $ (resp., $branch_2$) with probability $ \frac{16}{25} $ (resp., $\frac{9}{25}$) when the first (resp., the second) state is observed. In $branch_1$,  $ \exacttwinpal $ executes $ \awpal $ for $v$ on $vcv$. Similarly, in $branch_2$, $ \awpal $ is executed for $u$ on $ucu$. Here $\awpal$ reads $ v $ (or $u$) twice by scanning $ vcv $ (or $ucu$). $ \exacttwinpal $ gives its decision based on the outcome of $\awpal$ at the end.  If the second or third state is observed in $ branch_1 $ (resp., $branch_2$), then $w$ is accepted (resp., rejected) by $ \exacttwinpal $. Otherwise, the computation continues with the next iteration. The analysis of a single iteration is as follows:
\begin{itemize}
	\item Suppose that $ w \in \promisetwinpalyes $. Then, $ branch_1 $ ends in a state different than~$\ket{q_1}$ with probability at least $ 25^{-|v|} $ since $ v \notin \pal $, and, $ branch_2 $ ends in state~$\ket{q_1}$ always since $u \in \pal$. Therefore, a decision is given only in the first branch. That is, $w$ is accepted with a probability at least $ 16.25^{-|v|-1} $.
	\item Suppose that $ w \in \promisetwinpalno $. Then, $ branch_1 $ ends in state~$\ket{q_1}$ always since $u \in \pal$, and, $ branch_2 $ ends in a state different than~$\ket{q_1}$ with probability at least $ 25^{-|u|} $ since $ u \notin \pal $. Therefore, a decision is given only in the second branch. That is, $w$ is rejected with probability at least $ 9.25^{-|u|-1} $.
\end{itemize}
Therefore, the members of $\promisetwinpalyes$ (resp., $\promisetwinpalno$) are accepted (resp., rejected) exactly in expected exponential time as stated in Theorem~\ref{thm:TWIN}.

\section{The proof of Theorem \ref{thm:nosubPTMforTWIN}}
\label{app:nosubPTMforTWIN}

In order to prove that 2PFAs (or sublogarithmic space PTMs) cannot solve promise problem $ \promisetwinpal $ with bounded-error, we use a reduction technique used in \cite{YFSA12A}. It was shown that $ \twin =\{ucu \mid u \in\{a,b\}^* \} $ cannot be recognized by bounded-error 2PFAs. The proof is based on the following contradiction: Assume $ \twin $ can be recognized by a bounded-error 2PFA. Then $ \pal $ can also be recognized  by a bounded-error 2PFA. Since $ \pal $ is known to be not recognizable by bounded-error 2PFA~\cite{DS92}, we end up with a contradiction and our assumption must have been incorrect.

Suppose that a bounded-error 2PFA, say $ \mathcal{P} $, can solve $ \promisetwinpal $. Then we can construct a bounded-error 2PFA, say $\mathcal{P'}$, that solves $ \promisepal $ as follows. For a given input $ w = ucv $, a member of $ \promisepal $, $ \mathcal{P'} $ can simulate $ \mathcal{P}  $ on $ w' = ucucvcv $ and gives the same output as $ \mathcal{P}$. The simulation is straightforward: After leaving a $ c $ or an end-marker, the head of $ \mathcal{P} $ is on the first or the last symbol of  $ u $ or $ v $. $ \mathcal{P'} $ can easily keep a track of such leaves by its internal states, and so, it can place its head at the corresponding place of $ u $ or $ v $, respectively. So, $ \mathcal{P'} $ can solve $\promisepal$ with bounded error. But, as shown previously, this is impossible. We can derive the same result also for bounded-error sub-logarithmic space PTMs.

\section{Exact 2PFAs}
\label{app:exact2PFAs}

From an exact 2PFA, we can obtain two two-way nondeterministic finite automata, say $ \mathcal{N}_1 $ and $ \mathcal{N}_2 $, that give the decision of ``acceptance'' for yes instances and no instances, respectively. So, there are two rtDFAs $ \mathcal{D}_1 $ and $ \mathcal{D}_2 $ with the same behaviour. Then, we can obtain the desired rtDFA by taking the tensor product of $ \mathcal{D}_1 $ and $ \mathcal{D}_2 $. For each member of the promise problem, either $ \mathcal{D}_1 $ or $ \mathcal{D}_2 $ gives the decision of ``acceptance''.

\section{The proof of Theorem \ref{thm:promiseexptwinpal}}
\label{app:promiseexptwinpal}

We start by analyzing the long term behaviour of a single branch of $ \exacttwinpal $ (see Appendix \ref{app:TWIN}). Let $ w=ucucvcv $ be a member of $\promisetwinpalyes$ such that $ u,v \in \{a,b\}^*$ and $ |u|=|v|=n $. Suppose that we execute $branch_1$ $ T=25^{n}  $ times on $w$. The overall accepting probability is given by
\begin{equation*}
	\sum_{i=1}^{T} 25^{-n} (1-25^{-n})^{i-1}  =  25^{-n} \left( \dfrac{1 - (1-25^{-n})^{T}}{1-(1-25^{-n})} \right) 
	 = 1 - \left(1-\frac{1}{25^n}\right)^{25^{n}}.
\end{equation*}
Since $ \lim_{x \rightarrow \infty} \left( 1-\frac{1}{x} \right)^x = \frac{1}{e} $ and $ \left( 1-\frac{1}{x} \right)^x $ is less than $ \frac{1}{e} $ for every $ x \in \mathbb{Z}^{+} $, the overall accepting probability is always greater than
\begin{equation*}
	1 - \dfrac{1}{e}.
\end{equation*}
The same result can also be obtained for the overall rejecting probability of $ branch_2 $ when executed on a member of $\promisepalno$, say $ w=ucucvcv $, $ 25^{n} $ times, where $ u,v \in \{a,b\}^* $ and $ |u|=|v|=n $.

Remember the definition of our promise problem:
\begin{equation*}
	\promiseexptwinpal = (\promiseexptwinpalyes,\promiseexptwinpalno),
\end{equation*}
where 
\begin{itemize}
	\item $ \promiseexptwinpalyes = \{ (u c u c v c v c)^{t} | u,v \in \{a,b\}^+, |u|=|v|, u \in \pal, v \notin \pal, \mbox{ and } t \geq 25^{|u|}  \} $, and
	\item $ \promiseexptwinpalno = \{ (u c u c v c v c)^{t} | u,v \in \{a,b\}^+, |u|=|v|, u \notin \pal, v \in \pal, \mbox{ and } t \geq 25^{|u|} \} $.
\end{itemize}
We now give a Las Vegas rational rtQCFA $ \lvexppal $ for the problem. Let $ w = (ucucvcvc)^{t} \in \promiseexptwinpal $, where $ u,v \in \{a,b\}^* $, $ |u|=|v|>0 $, and $ t \geq 25^{|u|} $.  At the beginning, the computation is split into two branches with probabilities $ \frac{16}{25} $ and $ \frac{9}{25} $ (as described before). In the first (resp., the second) branch, $ branch_1 $ (resp., $ branch_2 $) of $ \exacttwinpal $ is implemented on each block of $ ucucvcvc $ until the end of input. $ \lvexppal $ gives the decision of ``don't know'' if the input has not been accepted or rejected. Using our analysis earlier in the section, we can conclude that the probabilities of decisions given by $ \lvexppal $ are given by:
\begin{itemize}
	\item Any member of $ \promiseexptwinpalyes $ is accepted with probability at least $ \frac{16}{25} \left( 1 - \frac{1}{e} \right) $. So, $ \lvexppal $ gives the decision ``don't know'' with a probability no more than $ \frac{9}{25} + \frac{16}{25e} $.
	\item Any member of $ \promiseexptwinpalno $ is rejected with probability at least $ \frac{9}{25} \left( 1 - \frac{1}{e} \right) $. So, $ \lvexppal $ gives the decision ``don't know'' with a probability no more than $ \frac{16}{25} + \frac{9}{25e} $.
\end{itemize}
Executing many copies of $ \lvexppal $ in parallel, we can increase both the accepting and rejecting probabilities arbitrary close to 1. The probability of ``don't know'' then gets closer to 0. To eliminate ``don't answer'', we can restart $ \lvexppal $ whenever we obtain a ``don't know'' answer. Thus, we can obtain a linear expected time rational exact restarting rtQCFA solving $ \promiseexptwinpal $.

\section{The proof of Theorem \ref{thm:nortPFA}}
\label{app:nortPFA}

We show that no bounded-error rtPFAs can solve $ \promiseexptwinpal $. Suppose that there exits such a rtPFA called $ \mathcal{P} $ that works with error probability $ \frac{1}{3} $. Then, by using $ \mathcal{P} $, we can construct a bounded-error 2PFA, say $ \mathcal{P'} $, for $ \promisetwinpal $. Let $ w = ucucvcv $ be the input such that $ u,v \in \{a,b\}^+ $ and $|u|=|v|=n>0$. $ \mathcal{P'} $ executes $ \mathcal{P} $ on $ (ucucvcv)^i $ for each $ i>0 $ and then follows the decision of $ \mathcal{P} $ with a certain probability. More specifically, $ \mathcal{P'} $ can easily  execute $ \mathcal{P} $ on $ (ucucvcv)^i $ for each $ i>0 $ in an infinite loop. That is, after the $ i^{th} $ iteration, $ \mathcal{P} $ is fed $ (ucucvcv)^i $. At this point, before continuing with the next iteration, the decision of $ \mathcal{P} $ on $ (ucucvcv)^i $ is implemented with probability $ 25^{-16n} $. This is performed by reading $ O(n) $ symbols. With the remaining probability, $ 1-25^{-16n} $, $ \mathcal{P'} $ continue with the next iteration, i.e. it feeds another $ ucucvcvc $ to $ \mathcal{P} $. Note that the last state of $ \mathcal{P} $ at the end of the $ i^{th} $ iteration is recorded by the internal states of $ \mathcal{P'} $ and $ \mathcal{P} $ is set to this state again just before the next iteration starts.

Let $ T=25^n $. Assume that $ w \in \promisetwinpalyes $. Then, after the $ (25^{n})^{th} $ iteration, we know that $ \mathcal{P} $ gives the decision of acceptance with probability at least $ \frac{2}{3} $. In the worst case, we can assume that $ \mathcal{P} $ gives the rejection decision before the $ (25^{n})^{th} $ iteration. Then the overall accepting probability of $ \mathcal{P'} $ can be calculated as follows.

The input is rejected with probability 
\begin{equation*}
	\sum_{i=1}^{T-1} 25^{-16n} (1 - 25^{-16n})^{i-1} < 1 - \sqrt[16]{\frac{1}{e}}
\end{equation*}
before the $ T^{th} $ iteration. So, the computation continues from the  $ T^{th} $ iteration with probability at least $ \sqrt[16]{\frac{1}{e}} $, which is greater than $ 0.9 $. After this point, the input is accepted with a probability at least $ \frac{2}{3} $ and so the overall accepting probability is always greater than $ 0.6 $. Similar analysis holds for the members of $ \promisetwinpalno $, and so, the rejecting probability of such members also exceeds $ 0.6 $. Thus, $ \mathcal{P'} $ can solve $ \promiseexptwinpal $ with bounded error. This is a contradiction. 

\section{The proof of Theorem \ref{thm:polynomial}}
\label{app:polynomial}

2QCFAs can recognize $ \eq $ in polynomial expected time \cite{AW02} with one-sided bounded-error. The given algorithm executes two phases, a quantum and then a classical one, in an infinite loop. We will use only the first phase and call it $ \aweq $ which operates on a single qubit ($ \ket{q_1}, \ket{q_2} $) and uses only algebraic transitions. The input is read by $ \aweq $ in realtime mode, and if it is $ a^mb^m $, then $ \ket{q_1} $ is observed exactly, and if it is $ a^mb^n $ ($ m \neq n $), then $ \ket{q_2} $ is observed with a probability at least $ \frac{1}{2(m-n)^2} $. Thus, similar to the previous algorithm, we can easily obtain a quadratic expected time exact algebraic restarting rtQCFA algorithm solving $ \promiseeq $.

Recently Yakary{\i}lmaz \cite{Yak12C} presented a linear-time one-sided bounded-error one-way QFA (1QFA) algorithm for the language $ \{a^nba^n\} $ where the model uses \textit{quantum head}. This algorithm can be easily modified to use only rational numbers (see \cite{Wat97,Wat98}). Then, we can follow that $ \promiseeq $ can be solved by a Las Vegas rational 1QFA in linear time, and so by an exact rational two-way QFAs in linear expected time (see \cite{YS10B}).

\section{The proof of Theorem \ref{thm:promiseeq}}
\label{app:promiseeq}

In \cite{DS89,DS90}, Dwork and Stockmeyer showed that 2PFAs or $ o(\log\log n) $-space PTMs can recognize a non-regular language only in super-polynomial expected time. Their proofs follows from some technical facts. We will review the related facts and then modify them for our aim.

For a given language $ \mathtt{L} \subseteq \Sigma^* $ and a length $ n $, two strings whose lengths are no more than $ n $, say $ w $ and $w'$, are called $ n $-dissimilar, i.e. $ w \nsim_{\mathtt{L},n} w' $, if there exists a string $ u \in \Sigma^* $ satisfying $ | wu | \leq n $ and $ | w'u | $ such that 
\[
	wu \in \mathtt{L} \mbox{ if and only if } w'u \in \overline{\mathtt{L}}.
\] 
We call $ u $ as a separator of $ w $ and $ w' $. $ N_\mathtt{L}(n) $ is the maximum number of pairwise $ n $-dissimilar strings. 
\begin{fact} (Lemma 4.3 on Page 1017 in \cite{DS90}) 
	For every $ \epsilon < \frac{1}{2} $, there are positive constants $ \alpha_\epsilon $ and $ \eta_\epsilon $ such that, if a 2PFA $ \mathcal{M} $ having $ c $ states recognizes the language $ \mathtt{L} $ with error bound $ \epsilon $ and within expected time $ T(n) $, then
\begin{equation}
	\label{eq:DS90}
	\left( \alpha_\epsilon c \left( \log T(n) + \log cn \right) \right)^{n^2} \geq N_\mathtt{L}(n) \mbox{ for all  } n \geq  \eta_\epsilon.
\end{equation}
\end{fact}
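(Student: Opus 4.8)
The plan is to reconstruct the Dwork--Stockmeyer crossing-sequence analysis, turning the expected-running-time bound into a bound on the number of distinct probabilistic \emph{behaviors} a $c$-state 2PFA can exhibit on strings of length at most $n$. The central reduction is that the interaction of $\mathcal{M}$ with a block of tape occupied by a string $w$ is summarized by a finite \emph{behavior object} $\beta(w)$: the collection of first-passage and halting probabilities describing, for each state in which the head enters the block from either side, the (sub)distribution of states in which it re-exits, together with the probabilities of accepting or rejecting while inside. This object determines the acceptance probability of $\mathcal{M}$ on every extension $wu$ through a fixed algebraic composition rule. Hence if $\beta(w)=\beta(w')$ then $\mathcal{M}$ behaves identically on $wu$ and $w'u$ for every $u$, so $w$ and $w'$ cannot be separated and cannot be $n$-dissimilar. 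Contrapositively, pairwise $n$-dissimilar strings have pairwise distinct behavior objects, so I would bound $N_\mathtt{L}(n)$ by the number of \emph{distinguishable} behavior objects arising from inputs of length at most $n$.

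First I would convert the running-time hypothesis into a bound on how finely $\mathcal{M}$ can resolve distinct behaviors. Because $\mathcal{M}$ halts within \emph{expected} time $T(n)$, Markov's inequality bounds the probability of long computations, and in particular bounds the expected number of times the head crosses any fixed boundary by $T(n)$; this keeps the operator that glues $\beta(w)$ to the behavior of a separator $u$ well conditioned and lets me truncate the analysis at depth polynomial in $T(n)$. I would then encode each behavior object by a bounded list of integer parameters---morally the discretized magnitudes of the relevant first-passage and halting probabilities---and argue that two strings whose encodings agree cannot be separated by any $u$ with $|wu|\le n$, because the error gap $1-2\epsilon$ absorbs the rounding once the grain is chosen fine enough relative to the length-$n$ composition. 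Counting encodings then gives the bound: with at most $\alpha_\epsilon c(\log T(n)+\log cn)$ admissible values for each parameter---the $\log T(n)$ reflecting the truncation depth and the $\log cn$ the number of head configurations on a length-$n$ tape---and with on the order of $n^2$ parameters, one obtains $N_\mathtt{L}(n)\le(\alpha_\epsilon c(\log T(n)+\log cn))^{n^2}$ for all $n\ge\eta_\epsilon$.

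I expect the main obstacle to be the exact accounting behind the two salient features of the bound: the exponent $n^2$ and the merely \emph{logarithmic} dependence on $T(n)$. The boundary summary at a single cut carries only $O(c^2)$ coordinates, so the quadratic exponent must come from resolving the behavior position by position across the $\le n$ tape cells and composing through separators of length up to $n$; identifying precisely which $\Theta(n^2)$ integer quantities must be recorded, and proving that this many determine the decision on every extension, is the technical heart. Equally delicate is showing that each such quantity ranges over only $\alpha_\epsilon c(\log T(n)+\log cn)$ values rather than polynomially many in $T(n)$, which is what forces the summary to be an encoding by discretized orders of magnitude rather than a naive $\delta$-net of the probability simplex. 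Finally, since only the expected time is controlled, every truncation and conditioning estimate must be justified through first-passage and recurrence analysis rather than a worst-case step count; this is exactly the crossing-sequence bookkeeping of Dwork and Stockmeyer~\cite{DS90}, which I would follow and adapt to obtain~(\ref{eq:DS90}).
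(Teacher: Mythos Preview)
The paper does not prove this statement: it is quoted as a \emph{Fact} from Dwork and Stockmeyer~\cite{DS90} (their Lemma~4.3), and the only commentary the paper adds is a one-sentence description of the proof shape---namely that the argument is by contradiction, showing that if the inequality fails then two $n$-dissimilar strings $w,w'$ become indistinguishable to $\mathcal{M}$, in the sense that $wu$ being accepted with probability at least $1-\epsilon$ forces $w'u$ to be accepted with probability at least $\tfrac12$. That sketch is included only so the authors can assert that the same argument carries over to promise problems; none of the crossing-sequence machinery is reproduced.

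Your proposal is therefore not comparable to ``the paper's proof,'' since there is none here; you are reconstructing the original Dwork--Stockmeyer argument. Your high-level plan---summarize the interaction of $\mathcal{M}$ with a block by a behavior object built from first-passage and halting probabilities, discretize these on a logarithmic scale governed by $T(n)$ and the tape length, and count bins---is exactly the architecture of~\cite{DS90}, and the contradiction you outline matches the paper's one-line summary. Your uneasiness about the exponent is well founded, however. The boundary behavior is parametrized by $O(c^2)$ probabilities (one per entry/exit state pair, plus halting outcomes), and in~\cite{DS90} it is $c^2$, not $n^2$, that appears as the exponent. With $n^2$ in the exponent the inequality is far too slack to support the conclusion the paper draws immediately afterward (that $N_{\mathtt L}(n)\ge n^\delta$ forces super-polynomial $T(n)$): taking logarithms gives only $\log(\alpha_\epsilon c(\log T(n)+\log cn))\ge \delta(\log n)/n^2$, which constrains $T(n)$ not at all. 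So do not try to manufacture an $n^2$ from per-cell data; the $n^2$ here is a transcription slip for $c^2$, and your single-boundary $O(c^2)$-coordinate summary is the correct object.
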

The proof was given based on a contradiction such that if Equation \ref{eq:DS90} does not hold then two $ n $-dissimilar strings, say $ w $ and $ w' $, cannot be distinguished by $ \mathcal{M} $. That is, if $ wu $ is accepted by $ \mathcal{M} $ with a probability at least $ 1 - \epsilon $, then $ w'u $ is accepted by $ \mathcal{M} $ with a probability at least $ \frac{1}{2} $, where $ u $ is a separator of $ w $ and $ w' $. The good news is that the proof still works if we focus on promise problems after modifying the dissimilarity relation as follows.

For a given promise problem $ \mathtt{P} = (\mathtt{P_{yes}},\mathtt{P_{no}}) $ defined on $ \Sigma $ and a length $ n $, two strings  whose lengths are no more than $ n $, say $ w $ and $w'$, are called $ n $-dissimilar, i.e. $ w \nsim_{\mathtt{P},n} w' $, if there exists a string $ u \in \Sigma^* $ satisfying $ | wu | \leq n $ and $ | w'u | $ such that 
\[
	wu \in \mathtt{P_{yes}} \mbox{ if and only if } w'u \in \overline{\mathtt{P_{no}}}.
\] 
$ N_\mathtt{P}(n) $ is the maximum number of pairwise $ n $-dissimilar strings. Now, we can rephrase the fact above for promise problems.
\begin{theorem}
	For every $ \epsilon < \frac{1}{2} $, there are positive constants $ \alpha_\epsilon $ and $ \eta_\epsilon $ such that, if a 2PFA $ \mathcal{M} $ having $ c $ states solves the promise problem $ \mathtt{P} $ with error bound $ \epsilon $ and within expected time $ T(n) $, then
\begin{equation}
	\label{eq:DS90-2}
	\left( \alpha_\epsilon c \left( \log T(n) + \log cn \right) \right)^{n^2} \geq N_\mathtt{P}(n) \mbox{ for all  } n \geq  \eta_\epsilon.
\end{equation}
\end{theorem}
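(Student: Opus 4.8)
The plan is to adapt the Dwork–Stockmeyer argument (Lemma 4.3 of \cite{DS90}) essentially verbatim, replacing the language-membership dichotomy by the promise-problem dichotomy. The original proof proceeds by contradiction: assuming Equation~\ref{eq:DS90} fails for some $n$, one finds two $n$-dissimilar strings $w,w'$ and a separator $u$ such that the 2PFA $\mathcal{M}$, run on $wu$ versus $w'u$, cannot tell the two apart --- i.e. if $wu$ is accepted with probability $\geq 1-\epsilon$ then $w'u$ is accepted with probability $\geq \frac{1}{2}$. The whole machinery behind this (bounding the number of ``distinguishable'' configurations / crossing sequences by the right-hand side of the inequality via a counting argument over the $c$ states, the expected time $T(n)$, the tape length $n$, and the $n^2$ factor coming from head positions squared) is purely about what the machine can do on a fixed input string; it does not care whether the acceptance/rejection behavior is interpreted through a language or through a promise problem. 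So the core estimates carry over unchanged.

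The key steps, in order, are as follows. First, I would observe that the new dissimilarity relation $\nsim_{\mathtt{P},n}$ has exactly the property needed to drive the contradiction: if $w \nsim_{\mathtt{P},n} w'$ with separator $u$, then $wu$ and $w'u$ lie in opposite ``sides'' of the promise (one in $\mathtt{P_{yes}}$, the other in $\overline{\mathtt{P_{no}}} \supseteq \mathtt{P_{no}}$, or symmetrically), so that a bounded-error 2PFA solving $\mathtt{P}$ must accept $wu$ with probability $\geq 1-\epsilon$ and accept $w'u$ with probability $\leq \epsilon < \frac{1}{2}$ (or vice versa). This is the exact analogue of the language case. Second, I would pull in the Dwork–Stockmeyer counting lemma verbatim: from the expected running time $T(n)$ and $c$ states one bounds the number of ``$n$-traces'' a 2PFA can realize by $\left(\alpha_\epsilon c(\log T(n) + \log cn)\right)^{n^2}$, and two strings with the same trace cannot be distinguished by $\mathcal{M}$ on short suffixes. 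Third, I would combine these: if $N_\mathtt{P}(n)$ exceeded that bound, two of the $N_\mathtt{P}(n)$ pairwise $n$-dissimilar strings would share a trace, yet they are $n$-dissimilar, so $\mathcal{M}$ both can and cannot distinguish them --- a contradiction. Hence Equation~\ref{eq:DS90-2} holds for all $n \geq \eta_\epsilon$.

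One mild subtlety worth flagging in the write-up: in the language setting the machine's behavior is defined on \emph{all} strings, whereas a promise-problem solver has no constraint on strings outside $\mathtt{P_{yes}}\cup\mathtt{P_{no}}$. But the argument only ever evaluates $\mathcal{M}$ on the strings $wu$ and $w'u$, which by the definition of a separator are \emph{guaranteed} to be promise instances (one a yes-instance, the other a no-instance, since $w'u \in \overline{\mathtt{P_{no}}}$ combined with $wu \in \mathtt{P_{yes}}$ and disjointness forces $w'u \in \mathtt{P_{yes}}$... more carefully, the ``if and only if'' in the definition gives the two genuinely opposite cases). So the bounded-error guarantee of $\mathcal{M}$ does apply to exactly the strings the proof touches, and nothing in the counting argument changes.

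**The main obstacle** I expect is not conceptual but expository: making sure the definition of $\nsim_{\mathtt{P},n}$ is stated so that it genuinely yields a yes-instance/no-instance pair (the condition ``$wu \in \mathtt{P_{yes}}$ iff $w'u \in \overline{\mathtt{P_{no}}}$'' must be unpacked into the two symmetric cases $wu\in\mathtt{P_{yes}}, w'u\in\mathtt{P_{no}}$ and $wu\in\mathtt{P_{no}}, w'u\in\mathtt{P_{yes}}$), and then checking that every place where \cite{DS90} invokes ``$x \in \mathtt{L}$ vs.\ $x \notin \mathtt{L}$'' is being used only for strings of this controlled form. Once that bookkeeping is done, the proof is a direct transcription, so I would present it as ``the proof of Lemma~4.3 in \cite{DS90} applies mutatis mutandis, using the modified dissimilarity relation, because the only strings on which $\mathcal{M}$'s error bound is invoked are promise instances.''
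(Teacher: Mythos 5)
Your proposal matches the paper's own treatment: the paper likewise asserts that the Dwork--Stockmeyer contradiction argument (Lemma 4.3 of \cite{DS90}) goes through verbatim once the dissimilarity relation is redefined so that $wu$ and $w'u$ land on opposite sides of the promise, since the counting of distinguishable behaviours of $\mathcal{M}$ never uses anything beyond the machine's action on those two concrete strings. Your added observation that the error bound is only ever invoked on genuine promise instances is exactly the point that makes the transcription legitimate, so the proposal is correct and takes essentially the same route as the paper.
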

It is clear that if $ N_\mathtt{L}(n) $ (or $ N_\mathtt{P}(n) $) is not less that $ n^{\delta} $ for some $ \delta > 0 $ for infinitely many $ n $, then the corresponding bounded-error 2PFA cannot have a sub-exponential expected time. As described in \cite{DS90}, if we augment a 2PFA with a $ s(n) $-space work tape, a $ s(n) $-space PTM, then we can safely replace $ c $ with $ c(n) = 2^{ds(n)} $ in Equations \ref{eq:DS90} and \ref{eq:DS90-2}, where $ d $ is a constant depending on the machine. Then, after simple calculations, we can follow our sub-exponential expected time statement also for bounded-error $ o(\log \log n) $-space PTMs.

Now, we show that $ N_{\promiseeq}(n) \geq \Omega(n) $ which is sufficient to conclude that $ \promiseeq $ cannot be solved by any bounded-error $ o(\log \log n) $-space PTMs in polynomial expected time. We define a set of $ m $ strings $ \{ a^i \mid 1 \leq i \leq m \} $. These $m$ strings are $ (3m+1) $-dissimilar since for any two different $ w=a^i $ and $ w'=a^j $, there exists $ u = ba^iba^j $ such that
\[
	wu = a^iba^iba^j \in \promiseeqyes \mbox{ and } w'u = a^jba^iba^j \in \promiseeqno.
\]

\section{The proof of Theorem \ref{thm:succintness}}
\label{app:succintness}

The behaviour of a rtPFAs over a single letter can be modelled as a Markov chain \cite{AF98,MP01,MPP01}. We utilize a fact given in~\cite{MP01}.

\begin{fact}
	(Theorem 6 on Page 486 of \cite{DS92})
	For any $ m \times m $ stochastic matrix $ A $, there is an integer $ d $ such that the limit
	\begin{equation*}
		\lim_{i \rightarrow \infty} \left( A^d \right)^i
	\end{equation*}
	exists. Furthermore, there are $ k \geq 1 $ positive integers $ d_1,\ldots,d_k $, which depend on $ A $ such that $ d_1 + \cdots + d_k \leq m $ and $ d = lcm(d_1,\ldots,d_k) $.
\end{fact}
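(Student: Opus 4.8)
The plan is to read $A$ as the transition matrix of a finite Markov chain on $m$ states (transposing if necessary, which is immaterial since $A$ and $A^{\mathsf T}$ share a spectrum and the limit simply transposes), and to read off the asymptotics of its powers from the Perron--Frobenius structure theory of nonnegative matrices. First I would put $A$ into Frobenius normal form: reordering the states so that the transient ones precede the recurrent ones, $A$ becomes block upper triangular,
\begin{equation*}
 A = \begin{pmatrix} Q & R \\ 0 & P \end{pmatrix},
\end{equation*}
where $Q$ governs the transient states and $P$ is block diagonal with one irreducible stochastic block per recurrent (essential) class. The point of this reduction is that the probability of remaining among the transient states tends to $0$, so $\rho(Q)<1$; hence every eigenvalue of $A$ lying on the unit circle is an eigenvalue of $P$, i.e.\ comes from a recurrent class.

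Next I would invoke Perron--Frobenius on each diagonal block of $P$. An irreducible stochastic block has spectral radius $1$, its Perron eigenvalue $1$ is simple, and its peripheral spectrum consists of exactly the $d_j$-th roots of unity, each simple, where $d_j$ is the index of imprimitivity (period) of that block. In particular all peripheral eigenvalues of $P$ are semisimple, and since they are disjoint from the spectrum of $Q$, the standard Jordan-inheritance property of block-triangular matrices shows that the peripheral eigenvalues of $A$ itself are semisimple, the eigenvalue $1$ having algebraic multiplicity equal to the number $k$ of recurrent classes.

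Setting $d = \mathrm{lcm}(d_1,\dots,d_k)$, I would then analyse $B = A^d$. Its eigenvalues are the $d$-th powers of those of $A$: every peripheral eigenvalue of $A$ is a $d_j$-th root of unity with $d_j \mid d$, so it is mapped to $1$, while every non-peripheral eigenvalue has modulus $<1$ and stays inside the open disk. Thus $B$ has $1$ as a semisimple eigenvalue with all other eigenvalues of modulus $<1$, and for such a matrix $B^i$ converges as $i\to\infty$, the eigenvalue-$1$ part contributing the constant spectral projection and the remainder decaying geometrically. Since $(A^d)^i = A^{di} = B^i$, the stated limit exists.

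Finally, for the bound $d_1+\dots+d_k \le m$ I would use that an irreducible block of period $d_j$ partitions its state set into $d_j$ nonempty cyclic subclasses, so the corresponding recurrent class contains at least $d_j$ states; as the recurrent classes are disjoint subsets of the $m$ states, summing yields $\sum_j d_j \le m$. The main obstacle is the spectral input rather than any calculation: one must correctly establish, via Perron--Frobenius, both that the peripheral eigenvalues are precisely roots of unity of the block periods \emph{and} that they are semisimple (no nontrivial Jordan block of modulus $1$), since it is exactly this semisimplicity that forces $B^i$ to converge rather than merely remain bounded.
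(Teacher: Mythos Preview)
Your argument is correct and is essentially the standard Perron--Frobenius route to this result: pass to Frobenius normal form, identify the peripheral spectrum with the roots of unity coming from the periods $d_j$ of the recurrent classes, use semisimplicity of the peripheral eigenvalues to conclude that $(A^d)^i$ converges for $d=\mathrm{lcm}(d_1,\dots,d_k)$, and bound $\sum_j d_j$ by the total number of recurrent states.

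However, there is nothing to compare against here: the paper does not prove this statement. It is quoted as a \emph{Fact} from the literature (attributed to \cite{DS92}, though in context the intended reference appears to be the Markov-chain result in \cite{MP01}), and is then used as a black box in the proof of Theorem~\ref{thm:succintness}. So your write-up supplies a proof where the paper deliberately omits one. If you want to tighten it, the only place that deserves an extra sentence is the claim that the peripheral eigenvalues of the full block-triangular matrix inherit semisimplicity from $P$; you state this, and it is true because $\rho(Q)<1$ makes $Q-\lambda I$ invertible for every peripheral $\lambda$, so the generalized eigenspace of $A$ at $\lambda$ is isomorphic to that of $P$.
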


Let $ \mathcal{P} $ be an $ m $-state rtPFA defined over a unary alphabet $ \Sigma=\{a\} $ and $ A $ be its stochastic transition matrix defined for $ a $. Based on the above fact, we can conclude that the series defined by the accepting probabilities of the strings \begin{equation*} \{ a^{d},a^{2d},a^{3d},\ldots \} \end{equation*} has a limit, say $ a_1 $. Similarly, that of \begin{equation*} \{ a,a^{d+1},a^{2d+1},a^{3d+1},\ldots \} \end{equation*} has also a limit, say $ a_2 $. In a similar fashion, we can define (at most) $ d $ limiting accepting probabilities: $ \{a_1,\ldots,a_{d}\} $. Let $ \delta>0 $ be sufficiently small such that $ (a_i-\delta,a_i+\delta) $ and $ (a_j-\delta,a_j+\delta) $ do not overlap unless $ a_i \neq a_j $, where $ 1 \leq i < j \leq d $. We name each interval $ (a_i-\delta,a_i+\delta) $ as a region called $ r_i $, where $ 1 \leq i \leq d $. Let $ M_{\delta} $ be a sufficiently big integer such that the accepting probability of each string longer than $ M_{\delta} $ lies in one of the regions defined above and $ a^{M_{\delta}} $ lies in $ a_1 $. Thus, the accepting probabilities of strings
\begin{equation*} 	a^{M_{\delta}},a^{M_{\delta}+1},a^{M_{\delta}+2},\ldots,a^{M_{\delta}+d-1},a^{M_{\delta}+d},a^{M_{\delta}+d+1},\ldots
\end{equation*}
lie in regions
\begin{equation*}
	r_1,r_2,r_3,\ldots,r_d,r_1,r_2,\ldots,
\end{equation*}
respectively. We now observe that the regions are visited in a cycle. Any unary deterministic finite automaton can enter a similar cycle for long strings as well. Ambainis and Yakary{\i}lmaz showed that \cite{AY12} any rtDFA can solve $ \tt EVENODD^k $ only if $ gcd(d,2^k) \neq gcd(d,2^{k+1}) $. That is, the length of the cycle must be a multiple of $ 2^{k+1} $. The same argument is also true for our case. The details are given below.

Suppose that $ \mathcal{P} $ can solve $ \tt EVENODD^k $ with an error bounded $ \epsilon \in (0,\frac{1}{2}) $ and $ m < 2^{k+1}  $. Without lose of generality, we can also assume that if $ a_i $ is in $ (\epsilon,1-\epsilon) $, then $ r_i $ does not contain both $ \epsilon $ and $ 1-\epsilon $, where $ 1 \leq i \leq d $. Therefore, it is clear that the members of $ \tt EVENODD^k_{yes} $ and $ \tt EVENODD^k_{no} $ longer than $ N_\delta $ must lie in the different regions. Let $ L=2^l=gcd(d,2^{k+1}) $ for some non-negative integer $ l $. Since $ m < 2^{k+1} $, $ l $ can be at most $ k $. (Remember that $ d = lcm(d_1,\ldots,d_k) $ and each $d_i$ is less than $m$, where $1 \leq i \leq k$ for some $k \geq 1$. Therefore, each $d_i$ can contain at most $l$ $2$(s) as its factor(s).) Therefore, $ gcd(d,2^k) $ is equal to $ 2^l $ as well. The accepting probabilities of strings
\begin{equation*}
	\{a^{i2^k} \mid 2^{k} \geq N_\epsilon \mbox{ and } i > 0 \}
\end{equation*}
lie in $ t=\frac{d}{2^l} $ different $ r $-regions, i.e. $ R = \{r_1',\ldots,r_t'\} $. More precisely, 
\begin{itemize}
	\item the accepting probability of $ a^{1 \cdot 2^k} $ lies in $ r_1' $,
	\item the accepting probability of $ a^{2 \cdot 2^k} $ lies in $ r_2' $,
	\item the accepting probability of $ a^{3 \cdot 2^k} $ lies in $ r_3' $,
	\item ...
	\item the accepting probability of $ a^{t \cdot 2^k} $ lies in $ r_t' $,
	\item the accepting probability of $ a^{(t+1) \cdot 2^k} $ lies in $ r_1' $,
	\item ... .
\end{itemize}
We can divide these regions into two classes, namely \textit{yes regions} and \textit{no regions}:  yes regions are closer to $ 1-\epsilon $ and no regions are closer to $ \epsilon $. Here a yes region corresponds to a member of $ \tt EVENODD_{yes}^k $ and a no region corresponds to a member of $ \tt EVENODD_{no}^k $. Therefore, each class must not be empty. Now, we focus on only the members of $ \tt EVENODD_{no}^k $: The accepting probabilities of strings
\begin{equation*}
	\{a^{i2^k} \mid 2^{k} \geq N_\epsilon \mbox{ and } i \mbox{ is an odd positive integer} \}
\end{equation*}
lie in also $ t=\frac{d}{2^l} $ different $ r $-regions that are actually a subset of $ R $. More precisely,
\begin{itemize}
	\item the accepting probability of $ a^{1 \cdot 2^k} $ lies in $ r_1' $,
	\item the accepting probability of $ a^{3 \cdot 2^k} $ lies in $ r_3' $,
	\item the accepting probability of $ a^{5 \cdot 2^k} $ lies in $ r_5' $,
	\item ...
	\item the accepting probability of $ a^{t \cdot 2^k} $ lies in $ r_t' $,
	\item the accepting probability of $ a^{(t+2) \cdot 2^k} $ lies in $ r_2' $,
	\item the accepting probability of $ a^{(t+4) \cdot 2^k} $ lies in $ r_4' $,
	\item ...
	\item the accepting probability of $ a^{(t+t-1) \cdot 2^k} $ lies in $ r_{t-1}' $,
	\item the accepting probability of $ a^{(t+t+1) \cdot 2^k} $ lies in $ r_1' $,
	\item ... .
\end{itemize}
Thus, yes regions must be empty. This is a contradiction. This is why $ gcd(d,2^k) $ must be different than $ gcd(d,2^{k+1}) $. It can be only possible when $ d $ is a multiple of $ 2^{k+1} $. As pointed in \cite{AY12}, it is straightforward that $ \tt EVENODD^k $ can be solvable by a $ 2^{k+1} $-state realtime deterministic finite automaton. 

\end{document}